\bmdefine{\NNN}{N}
\bmdefine{\ZZZ}{Z}
\bmdefine{\RRR}{R}
\bmdefine{\CCC}{C}
\bmdefine{\bbb}{b}
\bmdefine{\uuu}{u}
\bmdefine{\vvv}{v}
\bmdefine{\www}{w}
\bmdefine{\eee}{e}
\bmdefine{\xxx}{x}
\bmdefine{\XXX}{X}
\bmdefine{\zerovec}{0}
\numberwithin{equation}{section}
\newtheorem{thm}[equation]{Theorem}
\newtheorem{example}[equation]{Example}
\newtheorem{lemma}[equation]{Lemma}
\newtheorem{definition}[equation]{Definition}
\newtheorem{prop}[equation]{Proposition}
\newtheorem{remark}[equation]{Remark}
\newcommand{\bigzerou}{\smash{\lower1.7ex\hbox{\bg 0}}}
\newcommand{\bigastu}{\smash{\lower1.7ex\hbox{\bg *}}}
\begin{document}
\begin{center}
{\bf \large  Tests of Non-Equivalence among Absolutely Nonsingular Tensors through Geometric Invariants\\}
\end{center}
\begin{center}
Sakata, T.$^1$, Maehra, K.$^2$, Sasaki,T.$^3$, Sumi, T. $^{1}$, Miyazaki, M.$^{4}$ and Watanabe Y.$^5$  \\
\vspace*{2mm}
\end{center}
Department of Design Human Science, Kyushu University$^{1}$\\ 
School of Design, Kyushu University$^{2}$.\\
Department of Mathematics, Kobe University$^{3}$.\\
Department of Mathematics, Kyoto University of Education$^{4}$\\
Research Institute of Information Technology, Kyushu University$^{5}$
\section{Introduction} Tensor data analysis has successfully developed in various application fields, which is useful to  seize multi-factor dependence.  
An $n \times n \times p$ tensor is a multi-array datum $T=(T_{ijk}),$ where $1 \leq i,j \leq n$ and $1 \leq k \leq p.$ A type $n \times n \times p$ tensor $T$ is denoted by $T=(A_{1};A_{2};\cdots;A_{p})$ where $A_{i}$ denote $n \times n$ matrices. An $n \times n \times p$ tensor $T$ is said to be of rank $1$ if there is a vectors $\bm{a}=(a_{1},a_{2},...,a_{n}),$ $\bm{b}=(b_{1},b_{2},...,b_{n})$, and  $\bm{c}=(c_{1},c_{2},...,c_{p})$ such that $T_{ijk}=a_{i}b_{j}c_{k}$ for all $i,j,k,$
and the rank of a tensor $T$ is defined as the minimum of  the integer $r$ such that $T$ can be expressed as the sum of $r$ rank-one tensors. The maximal rank of  all tensors of type $n \times n \times p$ are also defined in obvious fashion and denoted by ${\rm maxrank(n,n,p)}$.The rank of a tensor describes the complexity of a tensorial datum and the maximal rank describes the model complexity of a class of tensors of a given type, and so they are very important concepts in both applied and theoretical fields. Therefore, the rank and maximal rank determination problems have attracted the interest of many researchers, for example, Kruskal \cite{Kruskal}, ten-Berge \cite{tenBerge}, and Common et al, \cite{Common}, etc. and now have being investigated intensively( for a comprehensive survey, see Kolda et al.\cite{Kolda}). Atkinson et al.(\cite{Atkinson1} and \cite{Atkinson2} ) claimed that ${\rm maxrank}(n,n,3) \leq 2n-1$. Here we introduce an important class of tensors.
\begin{definition}
A real tensor $T=(A_{1};A_{2};,\dots;A_{p})$ is said to be absolutely nonsingular if $x_{1}A_{1}+x_{2}A_{2}+\cdots +x_{p}A_{p}$ is nonsingular 
for all $(x_{1},x_{2},...,x_{p}) \neq (0,0,...,0).$  
\end{definition}
\begin{remark}
We called absolutely nonsingular tensors as exceptional tensors
in Sakata et al. {\rm \cite{Sakata1},\cite{Sakata2}}.
\end{remark}
 Sumi et al.\cite{Sumi1}  proved the claim of Atkinson et al. over the complex number filed $\mathbb{C}$ without any assumption and proved it over the real number filed $\mathbb{R}$ except the class of absolutely nonsingular tensors. Thus, for the proof of the claim of Atkinson et al. over the real number filed $\mathbb{R},$ it is the first thing to determine all absolutely nonsingular tensors. Absolutely nonsingular tensors are characterized by the determinant polynomial defined below. Searching of absolutely nonsingular tensors was pursued in Sakata et al.  {\rm \cite{Sakata1},\cite{Sakata2}} in this direction.  As well as searching absolutely nonsingular tensors, the equivalence among them under the rank-preserving transformation which is defined below is also important. 
Note that such equivalence relation has also a relation to the SLOCC equivalence of entangled states in the quantum communication (for example, see Chen et al. \cite{Chen}).          
\begin{definition}
For a $n\times n \times p$ tensor $T=(A_{1};A_{2};,\dots;A_{p}),$ the homogeneous polynomial in $x_{1},...,x_{p}$ of degree $n$ 
\begin{equation}
f_{T}(x_{1},x_{2},...,x_{p})=\det(x_{1}A_{1}+x_{2}A_{2}+\cdots +x_{p}A_{p})
\end{equation}
is called  the determinant polynomial of a tensor $T$.
\end{definition}
Then we have the following important characterization.
\begin{thm}\label{positivity}
If $T=(A_{1};A_{2};,\dots;A_{p})$ is absolutely nonsingular, its determinant polynomial $f_{T}(x_{1},x_{2},...,x_{p})$ is a positive definite homogeneous polynomial or negative definite homogeneous polynomial.
\end{thm}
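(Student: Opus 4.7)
The plan is a short topological argument using the intermediate value theorem. The determinant polynomial
\[
f_T(x_1,\dots,x_p) = \det\Bigl(\sum_{i=1}^p x_i A_i\Bigr)
\]
is a polynomial, hence continuous, in $(x_1,\dots,x_p) \in \mathbb{R}^p$. By the very definition of absolute nonsingularity, the matrix $\sum_i x_i A_i$ is nonsingular for every $(x_1,\dots,x_p) \neq (0,\dots,0)$, which is equivalent to $f_T(x) \neq 0$ for every $x \neq 0$.

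Next, I would invoke the connectedness of the punctured space $\mathbb{R}^p \setminus \{0\}$ (assuming $p \geq 2$, which is the setting of the Atkinson claim): any two nonzero points can be joined by a continuous path avoiding the origin, e.g.\ a straight segment if it misses $0$, and otherwise a two-segment detour through a generic third point. Consequently the image $f_T(\mathbb{R}^p \setminus \{0\})$ is a connected subset of $\mathbb{R} \setminus \{0\}$. Since the latter has only two connected components $(-\infty,0)$ and $(0,\infty)$, $f_T$ must take values in exactly one of them on $\mathbb{R}^p \setminus \{0\}$. Combined with $f_T(0)=0$, this is precisely the statement that $f_T$ is positive definite or negative definite.

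I do not foresee any serious obstacle: the proof is essentially a one-line consequence of the intermediate value theorem once the hypothesis is translated into the non-vanishing of $f_T$ off the origin, and the only topological fact used, connectedness of $\mathbb{R}^p\setminus\{0\}$ for $p\ge 2$, is standard. As a by-product the argument also forces $n$ to be even whenever an absolutely nonsingular tensor exists with $p\ge 2$, since the identity $f_T(-x)=(-1)^n f_T(x)$ would otherwise contradict the constant-sign conclusion just obtained.
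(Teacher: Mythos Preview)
Your argument is correct and essentially identical to the paper's: the paper also reduces to the intermediate value theorem on line segments, observing that if $f_T$ changed sign between $\bm{x}_0$ and $\bm{x}_1$ the segment would have to pass through the origin, and then uses exactly your ``two-segment detour through a generic third point'' to reach a contradiction. Your phrasing via connectedness of $\mathbb{R}^p\setminus\{0\}$ is a cleaner packaging of the same idea, and your parity remark ($n$ must be even when $p\ge 2$) is a nice bonus not stated in the paper.
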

\begin{proof}
 Let $T=(A_{1};A_{2};,\dots;A_{p})$ be absolutely nonsingular, and assume that there are two points $\bm{x}_{0}$ and $\bm{x}_{1}$ such that $f_{T}(\bm{x}_{0})>0$ and   $f_{T}(\bm{x}_{1})<0.$  The line $\ell$ combining the two points $\bm{x}_{0}$ and $\bm{x}_{1}$ must pass through the origin $\bm{0}$, since in the segment $[x_{0},x_{1}]$ there must be $\bm{x}'$ such that $f_{T}(\bm{x}')=0$ and it must be $\bm{0}$ because $T$ is absolutely nonsingular. Let take another point $\bm{x}_{2}$ which is not on the line $\ell$. 
Then, the line passing $\bm{x}_{0}$ and  $\bm{x}_{2}$ does not pass the origin and so $f(\bm{x}_{0}) f(\bm{x}_{2})<0$ is impossible just by the same reason given in the previous sentence.  So, $f(\bm{x}_{0}) f(\bm{x}_{2})>0.$ Next, consider the line passing $\bm{x}_{1}$ and $\bm{x}_{2}$, which also does not pass the origin and   $f(\bm{x}_{1}) f(\bm{x}_{2})<0$. This is also a contradiction. After all, there don't exist points $\bm{x}_{0}$ and $\bm{x}_{1}$ such that $f(\bm{x}_{0})>0$ and $f(\bm{x}_{1})<0.$  This proves Theorem \ref{positivity}.
\end{proof}
It is well known that tensor rank is invariant by typical matrix transformations, say, $p-$, $q-$, and $r-$transformations defined below. So, equivalence relation of two tensors means that they have a same rank. Thus, to study equivalence among tensors is of some importance for rank determination.    
\begin{definition}
For a $n \times n \times p$ tensor $T=(A_{1};A_{2};\cdots;A_{p}),$ the following transformations 
\begin{itemize}
\item[(1)] $T=(A_{1};A_{2};\cdots;A_{p}) \rightarrow T'=(PA_{1};PA_{2};\cdots;PA_{p}) $  by  an $n \times n$ matrix $P \in GL(n),$ 
\item[(2)] $T=(A_{1};A_{2};\cdots;A_{p}) \rightarrow T'=(A_{1}Q;PA_{2}Q;\cdots;A_{p}Q)$ by  an $n \times n$ matrix $P \in GL(n),$ 
\item[(3)]  $T=(A_{1};A_{2};\cdots;A_{p}) \rightarrow T'=(R_{11}A_{1}+R_{12}A_{2}+R_{13}A_{3}; R_{21}A_{1}+R_{22}A_{2}+R_{23}A_{3};R_{31}A_{1}+R_{32}A_{2}+R_{33}A_{3})$  by  a $p \times p$ matrix $P \in GL(n)$ 
\end{itemize}
are called as  $p-$, $q-$, and $r-$transformations and denoted by $T \rightarrow_{p} T',  T \rightarrow_{q} T' and T \rightarrow_{r} T'$ respectively. Further, if $T_{1} \rightarrow_{p} T_{2}$, the $T_{1}$ and $T_{2}$ are said to be in the $p-$equivalence.  $q-$ and $r-$equivalence are defined analogously.
\end{definition}
\begin{definition}
 Let $T_{1}=(A_{1};A_{2};\cdots;A_{p})$ and  $T_{2}=(B_{1};B_{2};\cdots;B_{p})$ be  two $n \times n \times p$ tensors. If there is a sequence of $\{T_{i} \}$ starting from $T_{1}$ and ending at $T_{2},$  in which $T_{i}$ and $T_{i+1}$ are in the relation of $p-$, or $q-$, or $r-$equivalence, then $T_{0}$ and $T_{1}$ are said to be equivalent. 
\end{definition}
Now we can reduce the equivalence relation into a more simple one by the following lemma.
\begin{lemma}\label{comutative}
$p-$, $q-$ and $r-$transformations are mutually commutative. 
\end{lemma}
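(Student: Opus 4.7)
The plan is to verify the three pairwise commutation relations by direct computation, since each transformation acts linearly on the slices of the tensor and matrix multiplication is associative.

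First I would check that $p$- and $q$-transformations commute. If we apply a $p$-transformation with matrix $P\in GL(n)$ followed by a $q$-transformation with matrix $Q\in GL(n)$, then the $i$th slice is sent to $(PA_i)Q$. Applying them in the reverse order gives $P(A_iQ)$. Associativity of matrix multiplication shows these agree, so the two orders give the same tensor.

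Next I would handle $p$- and $r$-transformations. Write an $r$-transformation with matrix $R=(R_{ij})\in GL(p)$ as the map sending the tuple of slices $(A_1,\dots,A_p)$ to $\bigl(\sum_{j}R_{1j}A_j,\dots,\sum_{j}R_{pj}A_j\bigr)$. Composing with a $p$-transformation by $P$ in one order yields, in the $i$th slot, $P\sum_{j}R_{ij}A_j$; in the other order, $\sum_{j}R_{ij}(PA_j)$. Since $P$ factors out of a finite linear combination by distributivity of matrix multiplication over matrix addition, the two expressions coincide. The commutation of $q$- and $r$-transformations follows by the same argument with $Q$ on the right.

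There is essentially no obstacle here beyond bookkeeping; the lemma is a routine consequence of associativity of matrix multiplication and the fact that left multiplication by $P$, right multiplication by $Q$, and formation of linear combinations in the third index act on genuinely different indices of the three-dimensional array $T=(T_{ijk})$. In coordinates, a composite of all three transformations sends $T_{ijk}$ to $\sum_{i',j',k'}P_{ii'}Q_{j'j}R_{kk'}T_{i'j'k'}$, and the independence of the three sums on disjoint indices makes the order of summation immaterial. Once each pair is shown to commute, the lemma is established.
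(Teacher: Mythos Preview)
Your proof is correct and follows essentially the same approach as the paper: both verify the pairwise commutations by direct computation using associativity and distributivity of matrix multiplication. The paper restricts to $p=3$ for notational simplicity and writes out the $p$--$r$ case explicitly (declaring the other two cases similar or obvious), while you treat general $p$ and add the coordinate remark $T_{ijk}\mapsto\sum P_{ii'}Q_{j'j}R_{kk'}T_{i'j'k'}$; these are only cosmetic differences.
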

\begin{proof}
For simplicity, we prove for $p=3$, however, the proof  is  similar for a general $p$. First we prove the commutativity of $p-$transformation and  $r-$transformation. 
Let 
$$
T_{1} \rightarrow_{p} T_{2} \rightarrow_{r} T_{3} \ \ and \ \ T_{1} \rightarrow_{r}T^{'}_{2} \rightarrow_{p}T^{'}_{3}
$$ 
We will show that $T_{3}=T^{'}_{3}.$ Let $T_{1}=(A_{1};A_{2};A_{3})$ and $P=(p_{ij})$ and $R=(r_{ij}).$  Then,
$$
T_{2}=(PA_{1};PA_{2};PA_{3}) 
$$
and 
$$
T_{3}=(r_{11}PA_{1}+r_{12}PA_{2}+r_{13}PA_{3}; r_{21}PA_{1}+r_{22}PA_{2}+r_{23}PA_{3};  r_{31}PA_{1}+r_{32}PA_{2}+r_{33}PA_{3})
$$
On the other hand 
$$
T^{'}_{2}=(r_{11}A_{1}+r_{12}A_{2}+r_{13}A_{3}; r_{21}A_{1}+r_{22}A_{2}+r_{23}A_{3};  r_{31}A_{1}+r_{32}A_{2}+r_{33}A_{3})
$$
and 
$$
T^{'}_{3}=(r_{11}PA_{1}+r_{12}PA_{2}+r_{13}PA_{3}; r_{21}PA_{1}+r_{22}PA_{2}+r_{23}PA_{3};  r_{31}PA_{1}+r_{32}PA_{2}+r_{33}PA_{3})
$$
Thus, $T_{3}=T^{'}_{3},$ and this means the commutativity of $p$- and  $r$-transformations. 
The commutativity of $q$- and $r$-transformations are proved similarly. $p$- and $q$-transformations are obviously commutative. 
This proves Lemma \ref{comutative}.    
\end{proof}
Note that in this paper we consider three cases of  (1) $P, Q \in GL(n)$ and  $R \in GL(p)$ and (2)  $P, Q \in GL(n)$ and $R \in SL(p).$
and (3) $P, Q \in SL(n)$ and $R \in SL(p).$ The first  is called $GL(p)$-equivalence or simply equivalence, and the second  is called $SL(p)$-equivalence in short. The third case is called, in a full term, $SL(n) \times SL(n) \times SL(p)$-equivalence. Lemma \ref{comutative} implies the following theorem.
\begin{thm}
$T_{1}$ and $T_{2}$ are $GL(p)$-equivalent if and only if there is a set of  $p$-transformation, $q$-transformation and $r$-transformation such that 
$$T_{1} \rightarrow_{p} T^{'} \rightarrow _{q} T^{"} \rightarrow_{r} T_{2}$$ 
\end{thm}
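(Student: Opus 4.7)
The $(\Longleftarrow)$ direction is immediate from the definition of $GL(p)$-equivalence, since the three-step chain $T_{1}\rightarrow_{p}T'\rightarrow_{q}T''\rightarrow_{r}T_{2}$ is already a valid equivalence sequence. So the work is entirely in $(\Longrightarrow)$.

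For the forward direction, I would start from the definition: if $T_{1}$ and $T_{2}$ are $GL(p)$-equivalent, there exists some finite sequence
\[
T_{1}=S_{0}\rightarrow S_{1}\rightarrow S_{2}\rightarrow\cdots\rightarrow S_{N}=T_{2},
\]
in which each arrow is a $p$-, $q$-, or $r$-transformation. The plan is to apply Lemma \ref{comutative} repeatedly to bubble all the $p$-transformations to the front, then all the $q$-transformations next, leaving the $r$-transformations at the end. Each swap is legal because any two of the three types commute; by a standard finite-permutation/bubble-sort argument on the $N$ arrows this reordering terminates in finitely many steps and preserves the endpoints $T_{1}$ and $T_{2}$.

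After reordering, the sequence has the shape
\[
T_{1}\xrightarrow{p}\cdots\xrightarrow{p}U\xrightarrow{q}\cdots\xrightarrow{q}V\xrightarrow{r}\cdots\xrightarrow{r}T_{2},
\]
with some number $a$ of $p$-arrows (given by matrices $P_{1},\dots,P_{a}\in GL(n)$), then $b$ of $q$-arrows (matrices $Q_{1},\dots,Q_{b}\in GL(n)$), then $c$ of $r$-arrows (matrices $R_{1},\dots,R_{c}\in GL(p)$). The next step is to collapse each block into a single arrow: the composition of the $p$-transformations by $P_{1},\dots,P_{a}$ is itself a $p$-transformation by the product $P_{a}\cdots P_{1}\in GL(n)$, since $GL(n)$ is closed under multiplication, and likewise for the $q$-block (with $Q_{1}\cdots Q_{b}\in GL(n)$, on the right) and the $r$-block (with $R_{c}\cdots R_{1}\in GL(p)$). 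This yields the required three-step chain $T_{1}\rightarrow_{p}T'\rightarrow_{q}T''\rightarrow_{r}T_{2}$.

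I do not anticipate a genuine obstacle: the only thing requiring care is the bookkeeping that the composition of two $p$-transformations is again a single $p$-transformation (and similarly for $q$ and $r$), which follows from the group structure of $GL(n)$ and $GL(p)$; and that Lemma \ref{comutative}, though stated for $p=3$, in fact holds for arbitrary $p$ by exactly the same computation, which is what legitimizes the reordering.
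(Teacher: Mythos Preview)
Your proof is correct and is precisely the argument the paper has in mind: the paper gives no explicit proof of this theorem, stating only that it is implied by Lemma~\ref{comutative}, and your bubble-sort-then-collapse argument is the natural way to make that implication explicit. The only minor point worth adding is that if some block (say the $p$-block) is empty you simply take the identity matrix, so that a single $p$-arrow still exists; otherwise nothing is missing.
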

Thus, the equivalence problem of tensors  $T_{1}=(A^{(1)}_{1}:,,,:A^{(1)}_{p})$ and $T_{2}=(A^{(2)}_{1}:,,,:A^{(2)}_{p})$  is reduced to the problem whether the following system of algebraic equations for $P$, $Q$ and $R$ can have a solution or not.
\begin{equation}
A^{(2)}_{i}=P( \{ \sum_{j=1}^{p}r_{ij}A^{(1)}_{j}\}Q, i=1,2,...,p
\end{equation}
These algebraic equations have too many variables to solve even when the size of matrices $A_{i}$ is moderate. So, in this paper, we propose to see the problem through the determinant polynomial. Then, though  we necessarily have to discard the sufficiency part of the problem, however, the problem becomes concise and tractable one by the following proposition.    
\begin{prop}\label{propequivalence}
If $T_{1}$ and $T_{2}$ are $GL(p)$-equivalent, it holds that there is a constant $c \in \mathbb{R}$ and a $p \times p$ nonsingular matrix $R \in GL(p)$ such that
\begin{equation}\label{equivalentequation}
f_{T_{2}}(\bm{x})=cf_{T_{1}}(\bm{x}R)
\end{equation}
\end{prop}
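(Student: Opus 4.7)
The plan is to unpack the decomposition provided by the previous theorem: by that theorem, $GL(p)$-equivalence of $T_1$ and $T_2$ gives matrices $P,Q \in GL(n)$ and $R=(r_{ij}) \in GL(p)$ realizing
\[
A^{(2)}_{i} = P\Bigl(\sum_{j=1}^{p} r_{ij} A^{(1)}_{j}\Bigr) Q, \qquad i=1,\dots,p.
\]
(By Lemma \ref{comutative} the order in which the three transformations are applied is immaterial, so we can safely assume the $r$-transformation is applied first and then the $p$- and $q$-transformations last.) Everything else is a direct computation with the determinant.

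First I would substitute this formula into the definition of $f_{T_{2}}$ and pull $P$ and $Q$ outside the sum by linearity of matrix multiplication:
\[
f_{T_{2}}(\bm{x}) = \det\!\Bigl(\sum_{i=1}^{p} x_{i} A^{(2)}_{i}\Bigr) = \det\!\Bigl(P\Bigl(\sum_{i=1}^{p}\sum_{j=1}^{p} x_{i} r_{ij} A^{(1)}_{j}\Bigr) Q\Bigr).
\]
Then I would apply multiplicativity of $\det$ to split off $\det(P)\det(Q)$, and swap the order of summation to collect the coefficient of each $A^{(1)}_{j}$:
\[
f_{T_{2}}(\bm{x}) = \det(P)\det(Q)\,\det\!\Bigl(\sum_{j=1}^{p} y_{j} A^{(1)}_{j}\Bigr), \qquad y_{j} = \sum_{i=1}^{p} x_{i} r_{ij}.
\]

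The crucial observation is that $\bm{y} = (y_{1},\dots,y_{p}) = \bm{x} R$ precisely as a row-vector/matrix product, so the right-hand side equals $\det(P)\det(Q)\, f_{T_{1}}(\bm{x} R)$. Setting $c := \det(P)\det(Q) \in \mathbb{R}\setminus\{0\}$ gives the asserted identity $f_{T_{2}}(\bm{x}) = c\, f_{T_{1}}(\bm{x} R)$.

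There is essentially no obstacle: the statement is just a bookkeeping consequence of multiplicativity of the determinant together with the previous theorem reducing equivalence to a composition of one $p$-, one $q$- and one $r$-transformation. The only point to be slightly careful about is the direction of the substitution $\bm{x} \mapsto \bm{x} R$, which is forced by the convention in Definition on $r$-transformations (the new $i$-th slice is $\sum_{j} r_{ij} A_{j}$, so the coefficient vector transforms by right multiplication by $R$).
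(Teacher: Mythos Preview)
Your proof is correct. The paper does not give an explicit proof of this proposition; it is stated as an immediate consequence of the preceding theorem and of equation~(1.3), which already records the identity $A^{(2)}_{i}=P\bigl(\sum_{j} r_{ij}A^{(1)}_{j}\bigr)Q$. Your argument simply spells out the one-line determinant computation that the paper leaves implicit, and it matches the paper's intended reasoning exactly.
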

So, we can say that
\begin{prop}
For two tensors, if the equation (\ref{equivalentequation}) does not hold for any constant $c \in \mathbb{R}$ and any matrix $R\in GL(p)$, they are not $GL(p)$-equivalent.    
\end{prop}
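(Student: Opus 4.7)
The plan is essentially a one-line observation: the statement to be proved is the contrapositive of Proposition \ref{propequivalence}. That proposition reads
\[
T_1 \text{ and } T_2 \text{ are } GL(p)\text{-equivalent} \ \Longrightarrow\ \exists\, c \in \mathbb{R},\, R \in GL(p) \text{ with } f_{T_2}(\bm{x}) = c\, f_{T_1}(\bm{x}R).
\]
The hypothesis of the proposition under consideration is precisely the negation of the right-hand side of this implication, and its conclusion is the negation of the left-hand side, so the statement follows immediately by contraposition; no further work is required once Proposition \ref{propequivalence} is granted.

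For completeness I would also sketch Proposition \ref{propequivalence} itself, since that is where the real content sits. By the theorem immediately preceding it, any $GL(p)$-equivalence can be realised as a chain $T_1 \rightarrow_p T' \rightarrow_q T'' \rightarrow_r T_2$, so there exist $P, Q \in GL(n)$ and $R = (r_{ij}) \in GL(p)$ with $T_2 = (B_1;\ldots;B_p)$ and $B_i = P\bigl(\sum_{j=1}^p r_{ij} A_j\bigr)Q$. A direct determinant computation, pulling out $\det(P)\det(Q)$ by multilinearity and recognising $\sum_i x_i r_{ij}$ as the $j$-th coordinate of $\bm{x}R$, gives
\[
f_{T_2}(\bm{x}) \;=\; \det(P)\det(Q)\,\det\!\Bigl(\sum_{j} \bigl(\sum_i x_i r_{ij}\bigr) A_j\Bigr) \;=\; \det(P)\det(Q)\, f_{T_1}(\bm{x}R),
\]
so one may take $c = \det(P)\det(Q)$. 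This is the only step with any calculation, and it is entirely routine.

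There is no real obstacle, only a conceptual point worth flagging: this test is purely necessary, not sufficient. Even if some $c \in \mathbb{R}$ and $R \in GL(p)$ can be produced matching $f_{T_2}(\bm{x}) = c\, f_{T_1}(\bm{x}R)$, the accompanying $P, Q \in GL(n)$ needed to realise an actual $GL(p)$-equivalence between the underlying tensors need not exist. This is the sufficiency that the authors explicitly discard immediately before stating Proposition \ref{propequivalence}, and it is the price paid for reducing a large algebraic system in $P$, $Q$, $R$ to a comparatively tractable question about two determinant polynomials.
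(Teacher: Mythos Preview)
Your proposal is correct and matches the paper's treatment: the paper presents this proposition immediately after Proposition~\ref{propequivalence} with the phrase ``So, we can say that'', i.e.\ as nothing more than its contrapositive, with no further argument given. Your additional sketch of Proposition~\ref{propequivalence} and the remark on non-sufficiency are accurate and go slightly beyond what the paper spells out at that point.
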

Though the reduced equation (\ref{equivalentequation}) happens to be solved algebraically in some cases. However, it is still hard to solve, in general, a system of algebraic  equation with too many variables. In fact, we need to decide whether a system of $ \frac{(n+1)(n+2)}{2}$ homogeneous equations with $(p^2+1)$ variables of degree $n$ have a solution or not. So, in this paper, we avoid to solve the problem algebraically and propose to 
attack the problem from a geometric view point, that is, we propose to test non equivalence by checking whether the two surfaces of the determinant polynomials of $T_{1}$ and $T_{2}$  have a same geometric invariants, or not. Here, multi-linear algebra and differential geometry intersect through the widow of determinant polynomials. \\
The first aim of this paper is to show theoretically that differential geometric invariants are useful as testers of non-equivalence among absolutely nonsingular tensors. The second aim is to show that we can calculate the values of the invariants with enough accuracy. Third, we compare the values of invariants calculated by the lattice method and by the t-design method.  And it is shown that the lattice point method gives more stable values than the t-design method.  As $SL(p)$-invariant, we consider first the volume enclosed by the constant surface and then we consider the affine surface area, and thirdly  we consider the $L^{p}$ affine surface area of convex body. Affine surface area was studied by Blaschke \cite{BL} and extended to $L_{p}$ affine surface area by Lutwak \cite{Lut}, (also see  Leichtwess \cite{Leicht}). As for a valuation theory of $L_{p}$ affine surface area,  see the recent papers by Ludwig \cite{Ludwig2} and Ludwig and Reitzer \cite{LudwigReit}. Finally, as a general reference of affine differential geometry, see K. Nomizu and T. Sasaki \cite{NS}.  \\
This paper is organized as follows. In Section 2, we show how to parametrize the constant surface of a determinant polynomial and in Section 3, we review briefly some definitions from  differential geometry. In Section 4, we argue rough $SL(p)$-invariants. In Section 5, we deal with $SL(p)$-invariant.  In the first subsection, we introduce the valuation theory for the set of convex bodies and in the second subsection, we argue a volume of the region enclosed by a constant surface as an $SL(p)$-invariant. In the third subsection, we argue the affine surface 
as a $SL(p)$-invariant. In Section 6, we consider the generalized affine surface, that is, $L_{p}$ affine surface area, especially centro-affine surface as a $GL(p)$-invariant. In Section 7,  we review the theory of spherical t-design briefly and give a theorem important for approximate calculation of our proposed invariants. 
In Section 8, we give numerical values of the invariants calculated by the lattice method and t-design method. It is shown numerically that the proposed invariants is usefull to discriminate non equivalence. In Section 9, the conclusion is given. Finally note that in the following we consider mainly the case of $n=4$ and $p=3,$ though some statements are given for general $n$ and $p.$ One reason is that absolutely nonsingular tensors are not so easy to obtain for general cases and the second reason is because it is easy to see that our method is also available for general cases. The study of much  higher values of $n$ and $p$ will be given in the future work.   
\section{Parametrization of constant surface}
The determinant polynomial of a $4 \times 4 \times 3$ tensor $T=(A_{1};A_{2};A_{3})$, i$f_{T}(x,y,z)=det(xA_{1}+yA_{2}+zA_{3} ),$ is a homogeneous polynomial of three variables with degree 4. We are concerned with the integral invariants of the constant surface $\partial\Omega_{T}=\{(x,y,z)|f_{T}(x,y,z)=1\}$ for the special linear group $SL(3)$ and the general linear group  $GL(3)$.  To get such invariants, we need to parametrize this surface by the usual spherical coordinate,
\begin{eqnarray}
x&=&r\sin s\cos t=r\Phi_{x}(s,t) \\
y&=&r\sin s\sin t=r\Phi_{y}(s,t) \\ 
z&=&r\cos s=r\Phi_{z}(s,t) ,
\end{eqnarray}
where  $0<s<\pi, 0<t<2\pi.$ Let $\bm{x}$ denote the point $(x,y,z)$ on the surface. Putting these into the equation $f_{T}(x,y,z)=1,$ we have 
\begin{equation} 
r^{4}=\frac{1}{p(s,t)}, 
\end{equation}
where 
\begin{equation} 
p(s,t)=f_{T}(\Phi_{x}(s,t),\Phi_{y}(s,t), \Phi_{z}(s,t)).
\end{equation}
And so, 
\begin{equation} \label{parametric}
\bm{x}=\frac{1}{p(s,t)^{1/4}} \left( \Phi_{x}(s,t),\Phi_{y}(s,t), \Phi_{z}(s,t) \right). 
\end{equation}
This equation (\ref{parametric}) gives a parametric representation of the constant surface $\partial \Omega_{T}$.  Then, the following is a starting point of this research of the constant surface.
\begin{thm}\label{compact}
The constant surface of the determinant polynomial of an absolutely nonsingular tensor is a compact set in $\mathbb{R}^{3}$ without self-intersection.
\end{thm}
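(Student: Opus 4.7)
The plan is to derive both assertions directly from Theorem \ref{positivity} together with the homogeneity of $f_T$ of degree $n$. Theorem \ref{positivity} says that $f_T$ is either positive definite or negative definite; by replacing $f_T$ with $-f_T$ and working with the level set $\{f_T = -1\}$ in the negative case, I may assume without loss of generality that $f_T$ is positive definite, so that $\partial\Omega_T = f_T^{-1}(1)$ is nonempty. Closedness is then immediate, since $\partial\Omega_T$ is the preimage of the closed set $\{1\}$ under the continuous polynomial $f_T$.

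For boundedness I would restrict $f_T$ to the unit sphere $S^{2}\subset\mathbb{R}^{3}$. Positive definiteness gives $f_T(\bm{u}) > 0$ for every $\bm{u}\in S^{2}$, and continuity on the compact sphere yields a positive minimum $m = \min_{\|\bm{u}\|=1} f_T(\bm{u}) > 0$. Homogeneity of degree $n$ then gives $f_T(\bm{x}) = \|\bm{x}\|^{n}\, f_T(\bm{x}/\|\bm{x}\|) \geq m\,\|\bm{x}\|^{n}$ for every nonzero $\bm{x}$. Hence any $\bm{x}\in\partial\Omega_T$ satisfies $\|\bm{x}\|\leq m^{-1/n}$, and Heine--Borel completes the proof of compactness.

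For absence of self-intersection I would argue via the parametrization in (\ref{parametric}). For each unit vector $\bm{u}\in S^{2}$, the univariate map $t\mapsto f_T(t\bm{u}) = t^{n}f_T(\bm{u})$ is strictly increasing on $t>0$ because $f_T(\bm{u})>0$, so the positive ray from the origin in the direction $\bm{u}$ meets $\partial\Omega_T$ at exactly one point, namely $\bm{u}/f_T(\bm{u})^{1/n}$. This produces a continuous bijection $\bm{u}\mapsto \bm{u}/f_T(\bm{u})^{1/n}$ from the compact Hausdorff space $S^{2}$ onto $\partial\Omega_T\subset\mathbb{R}^{3}$, which is automatically a homeomorphism. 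Consequently $\partial\Omega_T$ is an embedded topological sphere, and the parametrization (\ref{parametric}) is injective (modulo the usual coordinate identifications at the poles $s=0,\pi$ and the cut $t=0$), so the surface has no self-intersection.

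I do not anticipate a substantial obstacle; the argument is essentially a textbook consequence of positive definiteness plus homogeneity. The only mild care required is choosing the correct sign so that the level set at height $1$ is nonempty, and observing that the compactness of the parameter sphere automatically upgrades the continuous ``radial'' bijection to a homeomorphism, so that no separate injectivity-of-the-derivative argument is needed.
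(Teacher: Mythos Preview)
Your proof is correct and follows essentially the same route as the paper: reduce to the positive definite case, use the positive minimum of $f_T$ on the unit sphere together with homogeneity to get boundedness, and use the ``one intersection per ray'' (star-shaped) property to exclude self-intersection. Your write-up is in fact more complete than the paper's, since you supply the closedness step explicitly and upgrade the radial bijection $S^{2}\to\partial\Omega_T$ to a homeomorphism via the compact-Hausdorff argument, whereas the paper simply notes the star shape and the boundedness of $p(s,t)^{-1/4}$.
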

\begin{proof}
Without loss of generality, we assume that $f_{T}(\bm{x})$ is positive definite.  If  $\bm{x} \in \partial\Omega_{T}$, for any $0<r<1$ and $r>1,$ $r\bm{x}$ is not in $\partial \Omega_{T}.$ That is, the constant surface is of a star-shaped. This proves that the surface has not any self intersection. Since $p(s,t)$ is continuous on the unit sphere it takes a positive minimum and a positive maximum. So, $\bm{x}$ in the equation \ref{parametric} is bounded, which implies the compactness of the constant surface. This completes the proof of Theorem \ref{compact}.    
\end{proof}
The following 8 figures are examples of the constant surfaces of $ 4 \times4 \times 3$ absolutely nonsingular tensors.


\begin{figure}[htbp]
\begin{center}
\includegraphics[width=3cm, bb= 0 0 150 150 ]{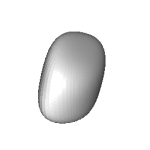}
\includegraphics[width=3cm, bb= 0 0 150 150 ]{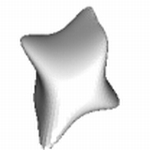}
\includegraphics[width=3cm, bb= 0 0 150 150 ]{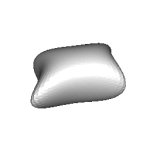}
\includegraphics[width=3cm, bb= 0 0 150 150 ]{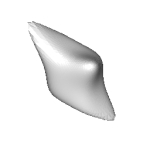}
\includegraphics[width=3cm, bb= 0 0 150 150 ]{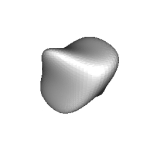}
\includegraphics[width=3cm, bb= 0 0 150 150 ]{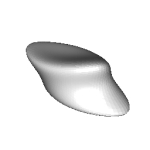}
\includegraphics[width=3cm, bb= 0 0 150 150 ]{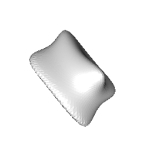}
\includegraphics[width=3cm, bb= 0 0 150 150 ]{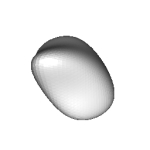}
\caption{Constant surfaces of the determinant polynomials of tensors：$No1, 3,10,20,99,119,207,237.$}
\end{center}
\end{figure}
Note that the numbering of tensors is based on the list of nonsingular tensors with elements consisting only of $-1, 0, 1$ found by us.  
Each figure corresponds to the following tensor and its determinant function respectively.
$$
T_{1}=\left( \begin{array}{cccc}  
 1 &  0 & 0 & 0  \\
  0 & 1 & 0 & 0\\
 0 &  0 & 1 & 0 \\
 0  & 0 &  0 & 1\\
\end{array}\right)
:
\left( \begin{array}{cccc}  
 -1 &   0 &   1 &   1  \\
  1  & -1 &   1 &  -1\\
 -1 &  -1 &  -1 &  -1 \\
 0   & 1 &   1 &   0\\
\end{array}\right)
:
\left( \begin{array}{cccc}  
    0  &  1 &  -1 &   1 \\
   0  & -1 &   1 &   1\\
    0 &  -1 &   0 &  -1\\
   -1 &  -1 &   1 &   1\\
   \end{array}\right),
   $$
\begin{eqnarray*}
f_{T_{1}}(x,y,z)&=&x^4+6y^4+2z^4+-3x^3y-8xy^3-3xz^3+5z^3y+\\
&& 7x^2y^2+3x^2z^2+8z^2y^2-2xy^2z-8xyz^2.
\end{eqnarray*}
$$
T_{3}=\left( \begin{array}{cccc}  
 1 &   0 &   0 &   0  \\
  0 & 1 &  0 & 0\\
 0 &  0 &  1 &  0 \\
 0   & 0 &   0 &   1\\
\end{array}\right)
:
\left( \begin{array}{cccc}  
 0 &   1 &   0 &   0  \\
  0  & -1 &   0 &  1\\
 1 &  0 &  - &  0 \\
 -1   & 1 &   -1 &   1\\
\end{array}\right)
:
\left( \begin{array}{cccc}  
    1  &  1 &  1 &   1 \\
   -1  & 0 &   0 &   -1\\
    0 &  -1 &   0 &  0\\
   -1 &  1 &  -1 &   1\\
   \end{array}\right),
   $$
  \begin{eqnarray*} 
f_{T_{3}}(x,y,z)&=&x^4+3y^4+6z^4-3x^3z+2xy^3+4y^3z\\
&&-7xz^3-6z^3y+x^2y^2+5x^2z^2-5z^2y^2+4xy^2z-x^2yz+2xyz^2.
\end{eqnarray*}
$$
T_{10}=\left( \begin{array}{cccc}  
 1 &   0 &   0 &   0  \\
  0 & 1 &  0 & 0\\
 0 &  0 &  1 &  0 \\
 0   & 0 &   0 &   1\\
\end{array}\right)
:
\left( \begin{array}{cccc}  
   0   & 1 &   0 &   0 \\
    0  & -1 &   0 &   1\\
    1   & 0 &   1 &   0\\
    0   & 0 &  -1 &  -1\\
 \end{array}\right)
 :
\left( \begin{array}{cccc}     
    1  &  1 &   1 &   1\\
  -1  &  0 &   0 &  -1\\
    0  & -1 &   0 &   0\\
   -1  &  1 &  -1 &   1\\
\end{array}\right),
   $$
  \begin{eqnarray*} 
 && f_{T_{10}}(x,y,z)=x^4+y^4+2z^4-x^3y+2x^3z+xy^3+2xz^3\\
&&-z^3y-x^2y^2+4x^2z^2-2xy^2z-2x^2yz-xyz^2.\\
  \end{eqnarray*}
$$
T_{20}=\left( \begin{array}{cccc}  
 1 &   0 &   0 &   0  \\
  0 & 1 &  0 & 0\\
 0 &  0 &  1 &  0 \\
 0   & 0 &   0 &   1\\
\end{array}\right)
:  
\left( \begin{array}{cccc}  
      
 -1 &   0 &   1 &  -1\\
    0  &  0 &   0 &   1\\
    1  & -1 &   0&    0\\
    1  &  0 &   0 &   0\\
    
 \end{array}\right)
 : 
\left( \begin{array}{cccc}     
    1  &  0 &   1 &   1\\
  1  &  1 &   1 &  -1\\
    0  & 0 &   0 &   1\\
   1  &  -1 &  -1 &   -1\\
\end{array}\right),
   $$
 \begin{eqnarray*} 
 && f_{T_{20}}(x,y,z)= x^4+y^4+2z^4-x^3y+x^3z+y^3z-xz^3+2z^3y+\\
 &&-x^2z^2+6z^2y^2+xy^2z+x^2yz+2xyz^2.
 \end{eqnarray*}
$$
T_{99}=\left( \begin{array}{cccc}  
 1 &   0 &   0 &   0  \\
  0 & 1 &  0 & 0\\
 0 &  0 &  1 &  0 \\
 0   & 0 &   0 &   1\\
\end{array}\right)
:   
\left( \begin{array}{cccc}  
      
 0 &   -1 &  - 1 &  -1\\
    1 &  1 &   1 &  - 1\\
    1  & -1 &   1&    0\\
    -1  &  1 &   -1 &   1\\
    
 \end{array}\right)
 :  
\left( \begin{array}{cccc}     
    0  &  1 &   0 &   0\\
  0  &  -1 &   1 &  1\\
    1  & 0 &   1 &   -1\\
   1  &  0 &  1 &   1\\
\end{array}\right),
   $$

 \begin{eqnarray*} 
 && f_{T_{99}}(x,y,z)=x^4+2y^4+2z^4+3x^3y+x^3z+3xy^3+\\
&&y^3z -3z^3y+6x^2y^2+z^2y^2+8xy^2z+x^2yz-5xyz^2
 \end{eqnarray*}
 $$
T_{119}=\left( \begin{array}{cccc}  
 1 &   0 &   0 &   0  \\
  0 & 1 &  0 & 0\\
 0 &  0 &  1 &  0 \\
 0   & 0 &   0 &   1\\
\end{array}\right)
:
\left( \begin{array}{cccc}  
 0 &   0 &  0&  -1\\
   - 1 &  1 &   1 &  - 1\\
    0  & -1 &   0&    0\\
    1  &  0 &   0 &   0\\
    
 \end{array}\right)
 :
\left( \begin{array}{cccc}     
    1  & -1 &   -1 &   1\\
  1  &  -1 &   1 &  0\\
    0  & 1 &   1 &   1\\
   1  &  1 &  0 &  - 1\\
\end{array}\right),
   $$

 \begin{eqnarray*} 
 && f_{T_{119}}(x,y,z)=x^4+y^4+9z^4+x^3y+xy^3+2y^3z-2z^3y+\\
&& 2x^2y^2-3x^2z^2-z^2y^2+xy^2z+x^2yz+xyz^2.
 \end{eqnarray*}
 $$
T_{207}=\left( \begin{array}{cccc}  
 1 &   0 &   0 &   0  \\
  0 & 1 &  0 & 0\\
 0 &  0 &  1 &  0 \\
 0   & 0 &   0 &   1\\
\end{array}\right)
: 
\left( \begin{array}{cccc}  
 -1&   -1 &  -1&  1\\
   0 &  1 &   1 &  0\\
    -1  & 0 &   0&    -1\\
    -1  &  1 &   0 &   -1\\
    
 \end{array}\right)
 :   
\left( \begin{array}{cccc}    
    -1  & 1 &   -1 &   1\\
  -1  &  0 &   1 &  1\\
    -1  & 0 &   -1 &   1\\
   0  &  0 &  -1 &  - 1\\
\end{array}\right),
   $$

 \begin{eqnarray*} 
 &&f_{T_{207}}(x,y,z)=x^4+2y^4+2z^4-x^3y-3x^3z+xy^3+6y^3z\\
&&-3xz^3+2z^3y-x^2y^2+4x^2z^2+6z^2y^2+8xy^2z-3x^2yz+7xyz^2.
 \end{eqnarray*}
$$
T_{237}=\left( \begin{array}{cccc}  
 1 &   0 &   0 &   0  \\
  0 & 1 &  0 & 0\\
 0 &  0 &  1 &  0 \\
 0   & 0 &   0 &   1\\
\end{array}\right)
: 
\left( \begin{array}{cccc}  
 0&   -1 &  1& - 1\\
   0 & - 1 &  - 1 &  0\\
    0 & 1 &   -1&    0\\
    1  &  1 &   0 &   1\\
    
 \end{array}\right)
 :  
\left( \begin{array}{cccc}     
    0  & -1 &   -1 &   -1\\
  -1  &  -1 &   -1 &  0\\
    0  & -1 &   0 &   1\\
   1  &  -1 &  1 &  1\\
\end{array}\right),
   $$
\begin{eqnarray*}
&&f_{T_{237}}(x,y,z)=x^4+2y^4+3z^4-x^3y+4y^3z +3z^3y+\\
&&x^2y^2-x^2z^2+6z^2y^2+x^2yz.
\end{eqnarray*}
\section{Notations from differential geometry}
For the use in the following sections, we review some basic notations of differential geometry. For more details, for example,  see the Nomizu and Sasaki \cite{NS}. When we denote the parametrized point on the surface by $x(s,t),$ we denote its partial derivatives by  
\begin{eqnarray}
x_{s}(s,t)&=&\frac{ \partial x(s,t)}{\partial s},\\
x_{t}(s,t)&=&\frac{ \partial x(s,t)}{\partial t}.
\end{eqnarray}
\begin{definition}
\begin{equation}
E=\langle x_{s},x_{s} \rangle, \ \ F=\langle x_{s},x_{t} \rangle, \ \ G=\langle x_{t},x_{t} \rangle,
\end{equation} 
are called the first fundamental coefficients.
Putting $dx=x_{s}(s,t)ds+x_{t}(s,t)dt$, the form 
\begin{equation}
I=\langle dx,dx \rangle =Es^2+2Mdsdt+Ndt^2
\end{equation}
is called the first fundamental form of the surface.
\end{definition}
\begin{definition}
\begin{equation}
\bm{n}=\frac{x_{s} \times x_{t} } {||x_{s} \times x_{t} ||}
\end{equation}
\end{definition}
is called the unit normal vector at the point $x(s,t).$
\begin{definition}
Putting
\begin{equation}
x_{s s}=\frac{ \partial ^2 x(s,t)}{\partial s^2}, \ \ x_{s s}=\frac{ \partial ^2 x(s,t)}{\partial s t}, \ \ x_{s s}=\frac{ \partial ^2 x(s,t)}{\partial t^2}, 
\end{equation}
the scalar functions
\begin{equation}
L=\langle x_{s s },\bm{n} \rangle \ \ M=\langle x_{s t  },\bm{n} \rangle \ \  \ \ N=\langle x_{t t  },\bm{n} \rangle 
\end{equation}
are called the second fundamental coefficients. The form
\begin{equation}
II=-\langle dX,d\bm{n}\rangle=Lds^2+2Mdsdt+Ndt^2
\end{equation}
is called the second fundamental form of the surface.
\end{definition}
\begin{definition}
At the point $P$ on the surface, let $k_{1}$ and $k_{2}$ be the maximum and minimum of curvatures of  curves generated by  the intersection of the surface with the plane spanned by the normal vector and a tangent vector, 
$H=(k_{1}+k_{2})/2$ is called the mean curvature and $K=k_{1}k_{2}$ is called the Gaussian curvature.  These are calculated by 
\begin{equation}
 H=\frac{EN-2FM+GL}{2(EG-F^2)} \ \ {\rm and} \ \ K=\frac{LN-M^2}{EG-F^2}.
\end{equation}
\end{definition}
\section{Rough $SL(3)$ invariants }
For checking $GL(3)$-equivalence between two tensors $T_{1}$ and $T_{2}$, we need to test the equation  
$$
f_{T_{2}}(\bm{x})=cf_{T_{1}}(\bm{x}R),  c \in \mathbb{R} \ \  {\rm and} \ \ R \in GL(3).
$$ 
Further, $S=R/|R|^{1/3} \in SL(3),$ 
\begin{equation}
cf_{T_{1}}(\bm{x}R)= c|R|^{4/3} f_{T_{1}}(\bm{x} R/|R|^{1/3})=c'f_{T_{1}}(\bm{x}S) \  \ with \ \ c' \in \mathbb{R} \ \ and \ \  S \in SL(3).
\end{equation}
Thus, $GL(3)$-equivalence reduces to $SL(3)$-equivalence. Then, the following theorem holds.
\begin{thm}\label{SLGL}
For two tensors $T_{1}$ and $T_{2}$, assume that $f_{T_{2}}(\bm{x})=cf_{T_{1}}(\bm{x}S)$ does not hold for any choice of $c \in  \mathbb{R} \ \ {\rm and} \ \  {\rm any}  S \in SL(3)$.  
Then, $T_{1}$ and $T_{2}$ are not $GL(3)$ equivalent. 
\end{thm}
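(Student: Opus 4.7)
My plan is to prove the statement by contraposition: assume that $T_1$ and $T_2$ are $GL(3)$-equivalent, and derive the existence of a pair $(c',S)$ with $c'\in\mathbb{R}$ and $S\in SL(3)$ satisfying $f_{T_2}(\bm{x})=c'f_{T_1}(\bm{x}S)$, which contradicts the hypothesis. The key ingredients are already in place: Proposition \ref{propequivalence} furnishes the equation at the $GL(3)$ level, and the displayed computation just preceding the theorem statement explains precisely how to renormalize a $GL(3)$ matrix to land in $SL(3)$ at the cost of absorbing a scalar into the leading constant.

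First I would invoke Proposition \ref{propequivalence} to produce $c\in\mathbb{R}$ and $R\in GL(3)$ with $f_{T_2}(\bm{x})=cf_{T_1}(\bm{x}R)$. Next I would set $S=R/|R|^{1/3}$. Since $R$ is a real invertible matrix, $|R|$ is a nonzero real number and its real cube root $|R|^{1/3}$ is well defined and nonzero; the matrix $S$ is then real and satisfies $\det(S)=\det(R)/|R|=1$, so $S\in SL(3)$. Writing $R=|R|^{1/3}S$ and using that the determinant polynomial $f_{T_1}$ is homogeneous of degree $n=4$, I get $f_{T_1}(\bm{x}R)=|R|^{4/3}f_{T_1}(\bm{x}S)$, so that setting $c'=c|R|^{4/3}\in\mathbb{R}$ yields $f_{T_2}(\bm{x})=c'f_{T_1}(\bm{x}S)$, contradicting the hypothesis of the theorem.

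There is essentially no main obstacle: the content of the theorem is really a bookkeeping reduction from $GL(3)$ to $SL(3)$, and everything hinges on two elementary points, namely that real cube roots exist for every nonzero real scalar and that $f_{T_1}$ is homogeneous of the correct degree so that the scalar pulled out of the argument can be merged into a single overall constant. The only place where one has to exercise a little care is the sign of $|R|$: for $|R|<0$ the cube root is negative, but this is harmless because the construction of $S$ and the homogeneity rescaling only require $|R|^{1/3}$ to be a nonzero real number, both of which are satisfied.
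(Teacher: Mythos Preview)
Your proof is correct and follows essentially the same route as the paper: the paper's argument (the displayed computation immediately preceding the theorem) also passes from $R\in GL(3)$ to $S=R/|R|^{1/3}\in SL(3)$, uses the degree-$4$ homogeneity of $f_{T_1}$ to factor out $|R|^{4/3}$, and absorbs this into a new constant $c'$. Your contrapositive framing and your explicit remark on the sign of $|R|$ simply spell out what the paper leaves implicit.
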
  
This justifies to study $SL(3)$-equivalence among absolutely nonsingular tensors for investigating $GL(3)$ equivalence. 
\begin{remark}\label{positivec}
If $c$ is negative, then by consider $T_{3}=PT_{2}$ with $|P|<0,$  we have 
$$
f_{T_{3}}(\bm{x})=cf_{T_{1}}(\bm{x}R),  c \in \mathbb{R}^{+} \ \  {\rm and} \ \ R \in GL(3).
$$
where $T_{3}$ is equivalent to $T_{2}.$ Thus, we can assume by writing $T_{3}$ as $T_{2}$ again  
$$
f_{T_{2}}(\bm{x})=cf_{T_{1}}(\bm{x}R),  c \in \mathbb{R}^{+} \ \  {\rm and} \ \ R \in GL(3).
$$
\end{remark}
The following rough $SL(3)$ invariants are useful. 
\begin{thm}
A convex surface is transformed into a convex surface by a $SL(3)$ linear transformation and so, a tensor with a determinant polynomial whose constant surface is convex is not  equivalent to a tensor with a determinant polynomial whose constant surface is not convex.
\end{thm}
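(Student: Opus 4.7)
My plan has two clean steps: first verify that the class of ``convex constant surfaces'' is stable under positive scaling and under arbitrary invertible linear maps, and second translate the algebraic equivalence relation into precisely these two operations acting on $\partial\Omega_T$.

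For the first step, I interpret ``convex surface'' as meaning that the bounded region it encloses is convex, which is legitimate because Theorem \ref{compact} guarantees that $\partial\Omega_T$ is a compact star-shaped hypersurface without self-intersection, hence the boundary of the well-defined compact set $\Omega_T=\{\bm{x}:f_T(\bm{x})\le 1\}$. Any linear map $A$ and any positive dilation $\bm{z}\mapsto \lambda \bm{z}$ commute with convex combinations, so they send convex sets to convex sets; since both maps are homeomorphisms of $\mathbb{R}^3$, they send boundaries of convex bodies to boundaries of convex bodies. This gives the first sentence of the theorem for $SL(3)$ linear transformations.

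For the second step, suppose $T_1$ and $T_2$ are $GL(3)$-equivalent. By Theorem \ref{SLGL} combined with Remark \ref{positivec}, there exist $c>0$ and $S\in SL(3)$ with $f_{T_2}(\bm{x})=c\,f_{T_1}(\bm{x}S)$. Using that $f_{T_1}$ is homogeneous of degree $4$, a point $\bm{x}$ lies on $\partial\Omega_{T_2}$ iff $f_{T_1}(\bm{x}S)=1/c$ iff $c^{1/4}\bm{x}S$ lies on $\partial\Omega_{T_1}$. Equivalently,
\[
\partial\Omega_{T_2}=\{c^{-1/4}\bm{z}S^{-1}:\bm{z}\in\partial\Omega_{T_1}\},
\]
so $\partial\Omega_{T_2}$ is the image of $\partial\Omega_{T_1}$ under a positive scaling composed with the $SL(3)$-transformation $\bm{z}\mapsto \bm{z}S^{-1}$. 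Applying the first step, if $\partial\Omega_{T_1}$ is convex so is $\partial\Omega_{T_2}$, and the contrapositive is exactly the non-equivalence assertion.

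There is no real obstacle here; the only point requiring care is terminological—checking that ``convex surface'' is well defined in this setting, which is handled by citing Theorem \ref{compact}—and bookkeeping the row-vs.-column convention so that the $SL(3)$ action comes out as right multiplication by $S^{-1}$ rather than $S$. Everything else is a direct unpacking of definitions and homogeneity.
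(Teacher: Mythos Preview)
Your argument is correct. The paper itself states this theorem without proof, evidently treating both assertions as self-evident; it immediately passes to the application (``Only the tensor of No.~1 has the convex surface \ldots''). Your write-up therefore supplies the details the paper leaves implicit: interpreting ``convex surface'' via the enclosed body $\Omega_T$ (justified by Theorem~\ref{compact}), the elementary fact that invertible linear maps and positive dilations preserve convexity, and the translation of equivalence into the relation $\partial\Omega_{T_2}=c^{-1/4}\,\partial\Omega_{T_1}\,S^{-1}$ via homogeneity of degree~$4$.

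Two minor remarks. First, the appeal to Remark~\ref{positivec} is not really needed here: once one normalizes (as in the proof of Theorem~\ref{compact}) so that both $f_{T_1}$ and $f_{T_2}$ are positive definite, the relation $f_{T_2}(\bm{x})=c\,f_{T_1}(\bm{x}S)$ forces $c>0$ automatically. Second, you could equally well cite Proposition~\ref{propequivalence} directly rather than the contrapositive of Theorem~\ref{SLGL}; either is fine.
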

Only the tensor of No.1 has the convex surface among 8 figures in the  Figure 1.1 and so the tensor of No. 1 is not $SL(3)$ equivalent to all other tensors in the Figure 1.  
\begin{definition}
A point on the surface is called a singular point if the normal vector at the point can not be defined.
\end{definition}
\begin{thm}
If the constant surface of a tensor $T_{1}$ has a singular point and the constant surface of a tensor $T_{2}$ has no singular point, they are not  $SL(3)$($GL(3))$ equivalent. 
\end{thm}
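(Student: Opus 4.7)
The plan is to prove the contrapositive: if $T_{1}$ and $T_{2}$ are $GL(3)$-equivalent, then $\partial\Omega_{T_{1}}$ and $\partial\Omega_{T_{2}}$ have their singular points in bijective correspondence, so one contains a singular point if and only if the other does.

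The first step is a reduction. Combining Proposition~\ref{propequivalence}, Theorem~\ref{SLGL}, and Remark~\ref{positivec}, the $GL(3)$-equivalence of $T_{1}$ and $T_{2}$ yields constants $c>0$ and $S\in SL(3)$ such that $f_{T_{2}}(\bm{x})=c\,f_{T_{1}}(\bm{x}S)$. Writing $n$ for the common degree of the determinant polynomials, the homogeneity of $f_{T_{1}}$ implies that $\bm{x}\in\partial\Omega_{T_{2}}$ iff $f_{T_{1}}(c^{1/n}\bm{x}S)=1$, so the linear map
\begin{equation*}
\Phi(\bm{x})=c^{1/n}\bm{x}S
\end{equation*}
restricts to a bijection $\partial\Omega_{T_{2}}\to\partial\Omega_{T_{1}}$. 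Since $\Phi$ is an invertible linear self-map of $\mathbb{R}^{3}$, it is in particular a smooth diffeomorphism of the ambient space.

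The second step is to interpret singularity as the vanishing of the gradient and to show that this condition is carried along by $\Phi$. At any $\bm{x}\in\partial\Omega_{T}$ the surface normal is proportional to $\nabla f_{T}(\bm{x})$, and hence fails to be defined exactly when $\nabla f_{T}(\bm{x})=\bm{0}$. Differentiating the identity $f_{T_{2}}(\bm{x})=c\,f_{T_{1}}(\bm{x}S)$ by the chain rule shows that $\nabla f_{T_{2}}(\bm{x})$ is obtained from $(\nabla f_{T_{1}})(\bm{x}S)$ by multiplication by the nonzero scalar $c$ and by the invertible matrix $S^{\top}$, while the degree-$n$ homogeneity of $f_{T_{1}}$ gives $(\nabla f_{T_{1}})(\Phi(\bm{x}))=c^{(n-1)/n}(\nabla f_{T_{1}})(\bm{x}S)$. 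Because $c>0$ and $S$ is invertible, all three gradients vanish simultaneously, so $\bm{x}\in\partial\Omega_{T_{2}}$ is a singular point iff $\Phi(\bm{x})\in\partial\Omega_{T_{1}}$ is one.

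Putting the two steps together exhibits a bijection between the singular loci of $\partial\Omega_{T_{1}}$ and $\partial\Omega_{T_{2}}$, which contradicts the hypothesis that exactly one of the two surfaces contains a singular point. No deep obstacle arises; the only technical care needed is to absorb the scalar $c$ into the homogeneity factor $c^{1/n}$, so that the correspondence between the two surfaces is realised by a single linear map rather than by a merely nonlinear identification, which is precisely what permits singular points to be transported directly.
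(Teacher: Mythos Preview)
The paper supplies no proof of this theorem; it is stated and followed immediately by an example, so there is nothing to compare your argument against. Your proof is correct and fills that gap.

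Two remarks. First, invoking Remark~\ref{positivec} to force $c>0$ is slightly off: that remark replaces $T_{2}$ by an equivalent tensor whose constant surface is \emph{not} $\partial\Omega_{T_{2}}$. It is cleaner to observe that if both $\partial\Omega_{T_{i}}$ are nonempty then, by Theorem~\ref{positivity}, both $f_{T_{i}}$ are positive definite, and the relation $f_{T_{2}}=c\,f_{T_{1}}(\cdot\,R)$ already forces $c>0$ without touching the tensors.

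Second, and more interestingly: since $f_{T}$ is homogeneous of degree $n$, Euler's identity gives $\bm{x}\cdot\nabla f_{T}(\bm{x})=n\,f_{T}(\bm{x})=n\neq 0$ on $\partial\Omega_{T}$, so $\nabla f_{T}$ never vanishes there. Under your (standard) reading of ``singular point'' the theorem is therefore vacuously true. This does not invalidate your argument, but it indicates that the authors have some other local notion in mind---in the subsequent example they declare the smooth surface $(x^{2}+y^{2})^{2}+z^{4}=1$ to be singular. Whatever pointwise geometric property is intended, your affine diffeomorphism $\Phi$ would transport it just as well, so the overall strategy survives regardless of the precise definition.
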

\begin{example}
Let 
$$
T_{1}=\left( \begin{array}{cccc}
0 & 1 & 0 & 0  \\
-1 & 0 & 0 & 0 \\
0 & 0 & 0 & -1 \\
0 & 0 & 1 & 0
\end{array}
\right);
\left( \begin{array}{cccc}
0 & 0 & 1 & 0  \\
0 & 0 & 0 & 1 \\
-1 & 0 & 0 & 0 \\
0 & -1 & 0 & 0
\end{array}
\right);
\left( \begin{array}{cccc}
0 & 0 & 0 & 1  \\
0 & 0 & -1 & 0 \\
0 & 1 & 0 & 0 \\
-1 & 0 & 0 & 0
\end{array}
\right)
$$
and
\[T_{2}= \left ( \begin{array}{cccc}
0 & 1 &  0 & 0 \\
-1 & 0 & 0  & 0 \\
 0 & 0 &  1 & 0 \\
0 & 0 & 0  & 1 \\
\end{array}
\right);
 \left ( \begin{array}{cccc}
0 & 0 &  0 & -1 \\
0 & 0 & 1 & 0 \\
1 & 0 &  0 & 0 \\
0 & 1 & 0  & 0 \\
\end{array}
\right);
 \left ( \begin{array}{cccc}
0 & 0 &  1 & 0 \\
0 & 1 & 0  & 0 \\
0 & 0 &  0 & 1 \\
1 & 0 & 0  & 0 \\
\end{array}
\right),
\]
and  the determinant polynomials of $T_{1}$ and $T_{2}$ are given below respectively. Then, we have
$$
f_{T_{1}}(x,y,z)=(x^2+y^{2}+z^{2})^2,
$$
and 
$$
f_{T_{2}}(x,y,z)=(x^2+y^{2})^2+z^{4}.
$$
Note that both of them are positive definite, that is, both of $T_{1}$ and $T_{2}$ are absolutely nonsingular. It is clear that the constant surface of $f_{T_{1}}(\bm{x})$ is a sphere and it has no singular point, and on the other hand, that the constant surface of $f_{T_{2}}(\bm{x})$ is a conic and it has a singular point. Hence, $T_{1}$ and $T_{2}$ are not equivalent.
 
\end{example}
\begin{definition}
When we consider a mesh of the parameter space,it produces a lattice of points on the constant surface. Let $K_{+}$ be the number of lattice points at which the Gaussian curvature is positive, and $K_{-}$ and $K_{0}$ be defined in the same way. \\
\end{definition}
Then we have
\begin{thm}
The triplet $(K_{+},K_{-}, K_{0})$ is an $SL(3)$-invariant.
\end{thm}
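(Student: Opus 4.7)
The plan is to reduce the theorem to the purely geometric statement that the sign of the Gaussian curvature at a point on a $C^{2}$ surface in $\mathbb{R}^{3}$ is preserved under any invertible linear transformation. First I would translate the algebraic $SL(3)$-equivalence between $T_{1}$ and $T_{2}$ into a concrete geometric relationship between $\partial\Omega_{T_{1}}$ and $\partial\Omega_{T_{2}}$. By Theorem \ref{SLGL} together with Remark \ref{positivec}, the equivalence means $f_{T_{2}}(\bm{x}) = c\, f_{T_{1}}(\bm{x}S)$ for some $c>0$ and $S\in SL(3)$. Using that $f_{T_{1}}$ is homogeneous of degree $4$ and substituting $\bm{y}=c^{1/4}\bm{x}S$, a direct computation shows that the linear map $L:\bm{y}\mapsto c^{-1/4}\bm{y}S^{-1}$ is a bijection from $\partial\Omega_{T_{1}}$ onto $\partial\Omega_{T_{2}}$, realized as the composition of a positive homothety with an element of $SL(3)$ and thus an invertible element of $GL(3)$ of positive determinant.

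Next I would establish the key lemma: for any $\phi\in GL(3)$ and any smooth surface $\Sigma\subset\mathbb{R}^{3}$, the sign of the Gaussian curvature at $p\in\Sigma$ equals the sign at $\phi(p)\in\phi(\Sigma)$. Near $p$ one writes $\Sigma$ as a graph $w=h(u,v)$ over the tangent plane $T_{p}\Sigma$ using a splitting $\mathbb{R}^{3}=T_{p}\Sigma\oplus\langle\bm{n}_{p}\rangle$, and the sign of $K(p)$ coincides with the sign of $\det\mathrm{Hess}(h)(0)$. Choosing an analogous splitting at $\phi(p)$ and expressing $\phi(\Sigma)$ as a graph $\tilde w=\tilde h(\tilde u,\tilde v)$ over $\phi(T_{p}\Sigma)=T_{\phi(p)}\phi(\Sigma)$, a second-order Taylor expansion shows that $\mathrm{Hess}(\tilde h)(0)$ is obtained from $\mathrm{Hess}(h)(0)$ by conjugation with the invertible $2\times 2$ matrix representing $\phi|_{T_{p}\Sigma}$ in the chosen bases, together with a nonzero scalar arising from the projection onto the new normal direction. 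By Sylvester's law of inertia, the signature of the quadratic form, and in particular the sign of its determinant, is preserved; hence so is the sign of $K$.

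Combining the two ingredients, $L$ sends each point of $\partial\Omega_{T_{1}}$ to a point of $\partial\Omega_{T_{2}}$ with the same sign of Gaussian curvature. Transporting a parameter mesh on $\partial\Omega_{T_{1}}$ to the corresponding mesh on $\partial\Omega_{T_{2}}$ through $L$ yields a bijection between the two lattices of sample points that preserves the trichotomy elliptic/hyperbolic/parabolic, so the counts $K_{+}, K_{-}, K_{0}$ agree on both surfaces.

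The main obstacle is the geometric lemma. The temptation is to argue directly from $K=(LN-M^{2})/(EG-F^{2})$, but the six Euclidean fundamental coefficients do not behave tensorially under elements of $GL(3)\setminus O(3)$; in particular $\phi$ need not preserve orthogonality, so $\phi(\bm{n}_{p})$ is generally not normal to $\phi(T_{p}\Sigma)$. The subtlety is to verify that the scalar factor coming from reconciling the two choices of normal does not flip the sign of $\det\mathrm{Hess}(\tilde h)(0)$; this is where one uses the affine-geometric fact that the tangent plane, together with its second-order contact quadratic form, is an affine invariant up to a nonzero multiplicative constant, so that Sylvester's law applies unambiguously.
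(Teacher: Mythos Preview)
The paper states this theorem without proof, so there is nothing to compare against directly; I evaluate your argument on its own merits.

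Your first two steps are sound. The identification of the $SL(3)$-equivalence with an explicit invertible linear map $L:\partial\Omega_{T_{1}}\to\partial\Omega_{T_{2}}$ is correct, and your ``key lemma'' that the sign of the Gaussian curvature at a point of a smooth surface is preserved under any $\phi\in GL(3)$ is a genuine affine-geometric fact; your outline via graph representations and Sylvester's law of inertia is the standard way to see it.

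The gap is in the final step. The counts $K_{+},K_{-},K_{0}$ are \emph{not} defined by picking any lattice on the surface; according to Definition~4.5 they are obtained from a fixed mesh in the $(s,t)$ parameter space, pushed to $\partial\Omega_{T}$ via the spherical-coordinate parametrization $(s,t)\mapsto p(s,t)^{-1/4}(\Phi_{x},\Phi_{y},\Phi_{z})$ of equation~(\ref{parametric}). Your sentence ``transporting a parameter mesh on $\partial\Omega_{T_{1}}$ to the corresponding mesh on $\partial\Omega_{T_{2}}$ through $L$'' conflates two different lattices on $\partial\Omega_{T_{2}}$: the one obtained by applying $L$ to the $(s,t)$-lattice of $\partial\Omega_{T_{1}}$, and the one obtained directly from the $(s,t)$-lattice of $\partial\Omega_{T_{2}}$. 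These coincide only when $L$ commutes with the radial parametrization, i.e.\ when the restriction of $L$ to the unit sphere is the identity on directions, which forces $L$ to be a positive scalar multiple of an orthogonal matrix. For a generic $S\in SL(3)$ (or the composite $L=c^{-1/4}S^{-1}$) this fails: a mesh point $(s_{0},t_{0})$ on $\partial\Omega_{T_{2}}$ pulls back under $L^{-1}$ to a point on $\partial\Omega_{T_{1}}$ whose direction is the normalization of $L^{-1}(\Phi(s_{0},t_{0}))$, which is generically not a mesh point. Hence your bijection between sample lattices does not exist, and the equalities $K_{+}(T_{1})=K_{+}(T_{2})$, etc., do not follow from your argument.

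What your lemma \emph{does} establish is that the decomposition of the surface into elliptic, hyperbolic, and parabolic regions is carried bijectively by $L$; in particular the \emph{existence} of points of each curvature sign is an $SL(3)$-invariant, and asymptotically (mesh $\to 0$) the ratios $K_{\pm}/(K_{+}+K_{-}+K_{0})$ approximate the relative spherical measure of the corresponding regions, which is again $SL(3)$-invariant because $L$ acts on directions through a diffeomorphism of $S^{2}$ only up to a Jacobian factor---so even those limiting ratios need care. If you want the literal statement, you must either argue at the level of these sign regions rather than sampled counts, or impose a convention on how the mesh on $\partial\Omega_{T_{2}}$ is chosen (namely, as the $L$-image of the mesh on $\partial\Omega_{T_{1}}$), which the paper does not do.
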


\section{$SL(3)$ integral invariants}
The following Figure 2 and 3 shows the figures of convex bodies that are enclosed by constant surfaces. The number of figures corresponds to that in our list of absolutely nonsingular tensors( Maehara \cite{Maehara}). 
\begin{figure}[htbp]\label{positive}
\begin{center}
\includegraphics[width=3cm, bb= 0 0 150 150 ]{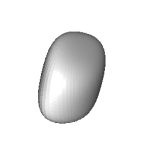}
\includegraphics[width=3cm, bb= 0 0 150 150 ]{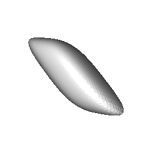}
\includegraphics[width=3cm, bb= 0 0 150 150 ]{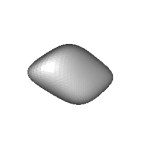}
\includegraphics[width=3cm, bb= 0 0 150 150 ]{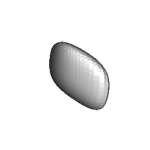}
\includegraphics[width=3cm, bb= 0 0 150 150 ]{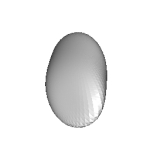}
\includegraphics[width=3cm, bb= 0 0 150 150 ]{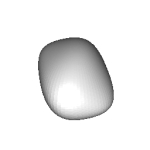}
\includegraphics[width=3cm, bb= 0 0 150 150 ]{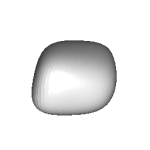}
\includegraphics[width=3cm, bb= 0 0 150 150 ]{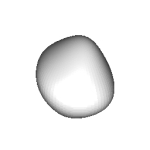}
\caption{The constant surfaces for $No1,19,22,23,42,60,61,65$ absolutely nonsingular tensors}
\end{center}
\end{figure}
\begin{figure}[htbp]\label{positivek}
\begin{center}
\includegraphics[width=3cm, bb= 0 0 150 150 ]{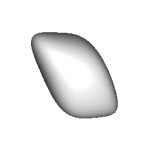}
\includegraphics[width=3cm, bb= 0 0 150 150 ]{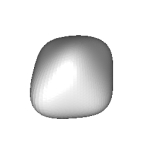}
\includegraphics[width=3cm, bb= 0 0 150 150 ]{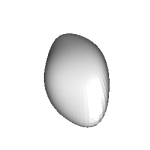}
\includegraphics[width=3cm, bb= 0 0 150 150 ]{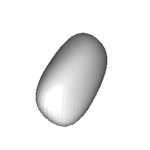}\\
\includegraphics[width=3cm, bb= 0 0 150 150 ]{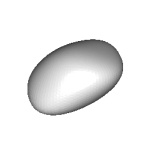}
\includegraphics[width=3cm, bb= 0 0 150 150 ]{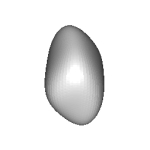}
\includegraphics[width=3cm, bb= 0 0 150 150 ]{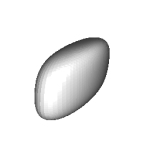}
\includegraphics[width=3cm, bb= 0 0 150 150 ]{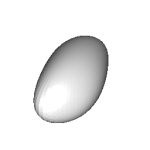}
\caption{The constant surfaces for $No 72,74,76,83,85,87,95,103$ absolutely nonsingular tensors}
\end{center}
\end{figure}
In this section, we want to find some $SL(3)$-invariant for such convex bodies. For this purpose, the following valuation theory is a quite useful.     
Here, we make a brief summary of the valuation theory from Ludwig \cite{Ludwig2} and \cite{LudwigReit}. The definition is stated for a general $p$.  
\begin{definition}
Let $\mathcal{K}$ denote the set of all convex bodies in $\mathbb{R}^{p}$. 
A functional $t(\cdot)$ from $\mathcal{K}$ to  $\mathbb{R}$ is called a valuation if it satisfies 
\begin{equation}
t(K )+t(L)=t(K \cup L ) +t(K \cap L), \ \ K, L \in \mathcal{K}.  
\end{equation}
\end{definition}
Next theorem is a starting point of characterization of invariant valuation. 
\begin{thm} \label{Hadwiger}(Hadwiger\cite{Hadwiger}).
A continuous valuation  $t(\cdot)$ from $\mathcal{K}$ to $\mathbb{R}$ is invariant with respect to rigid motion if and only if 
there are constants $c_{0},c_{1},...,c_{p}$ such that 
\begin{equation}
t(K)=c_{0}V_{0}(K)+c_{1}V_{1}(K)+\cdots+c_{p}V_{p}(K),
\end{equation}
where 
$V_{0}(K),V_{1}(K),\cdots,$ and $V_{p}(K)$ are the intrinsic volumes of $K.$ 
We remark that the volumes $V_{k}$ are called quermassintegrals of $K$ in \cite{Leicht} and that $V_{0}$
 is the Euler index, $V_{p-1}$ the affine volume of $\partial K$, and $V_{p}$ is the volume of the convex body $K.$ In the following, we simply denote by $a(K)$ as the affine volume and call it affine surface area following \cite{LudwigReit}, and denoe by $V(K)$ the volume $V_{p}(K)$.  
\end{thm}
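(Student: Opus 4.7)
The easy direction is routine and I would dispose of it first: each intrinsic volume $V_k$ is, by its standard integral-geometric definition (e.g.\ via the Steiner formula and Kubota's projection formulas), continuous in the Hausdorff metric, invariant under rigid motions, and additive in the valuation sense. Any real linear combination inherits these three properties, so the ``if'' direction is immediate.

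For the ``only if'' direction, the strategy is Hadwiger's classical induction on the dimension $p$. The base case $p=1$ reduces to showing that a continuous translation-invariant valuation on compact intervals must be of the form $c_{0}+c_{1}\cdot\mathrm{length}$: the valuation axiom applied to two overlapping intervals gives Cauchy-type functional equation on the length, and continuity upgrades this to linearity. For the inductive step, the plan is first to reduce from arbitrary convex bodies to convex polytopes using continuity and the fact that polytopes are dense in $\mathcal{K}$ under the Hausdorff metric; then to reduce further to the class of simple valuations, i.e.\ those vanishing on convex bodies of dimension $<p$. Given any candidate $t$, I would subtract off terms supported on boundary strata by applying the inductive hypothesis to the restriction of $t$ to each supporting hyperplane (which is itself a continuous rigid-motion invariant valuation in dimension $p-1$), leaving a remainder that vanishes on all lower-dimensional convex sets.

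The heart of the proof, and the step I expect to be the main obstacle, is then the simple-valuation lemma: every continuous rigid-motion invariant simple valuation on $\mathcal{K}$ in $\mathbb{R}^{p}$ is a constant multiple of the Lebesgue volume $V_{p}$. The delicate part here is to show that $t$ evaluated on an axis-aligned box $[0,\ell_{1}]\times\cdots\times[0,\ell_{p}]$ is a polynomial in the edge lengths $\ell_{1},\ldots,\ell_{p}$: one slices the box along coordinate hyperplanes, invokes the valuation identity repeatedly, and uses simplicity to kill all boundary contributions, so that the Cauchy equation in each $\ell_{i}$ forces polynomial (in fact, by the simplicity and rotation-invariance constraints, exactly monomial $\ell_{1}\cdots\ell_{p}$) behavior. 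Rotation invariance then equates $t$ on rotated copies of the same box, which pins down the degree-$p$ coefficient as a single constant $c_{p}$. Extending from axis-aligned boxes to parallelotopes is done by an approximation argument using simplicial dissections and continuity, and from parallelotopes to general polytopes by the Hadwiger dissection / finite additivity toolbox.

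With the simple-valuation lemma in hand the induction closes: writing $t(K) = c_{p}V_{p}(K)+t'(K)$, the remainder $t'$ vanishes on all bodies of maximal dimension's complement and on lower-dimensional bodies it is governed by the inductive hypothesis applied to valuations on affine hyperplanes, producing the coefficients $c_{0},\ldots,c_{p-1}$. Finally, continuity in the Hausdorff metric transfers the resulting identity from polytopes to all of $\mathcal{K}$, completing the proof. The technical bottleneck is unquestionably the polynomial-on-boxes argument inside the simple-valuation lemma, since it is there that the interplay between the valuation axiom, simplicity, and rotation invariance must be exploited in a genuinely combinatorial way rather than merely formally.
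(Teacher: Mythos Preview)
The paper does not prove this theorem at all: it is simply quoted as a classical result with a citation to Hadwiger's monograph, and the authors move on immediately to Ludwig's equi-affine characterization. So there is no ``paper's own proof'' to compare against; the statement functions purely as background.

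Your outline is a faithful sketch of the standard Hadwiger argument (induction on dimension, reduction to simple valuations via subtraction of lower-dimensional contributions, and the core lemma that a continuous rigid-motion-invariant simple valuation is a multiple of Lebesgue volume). One point worth tightening if you ever write it out in full: the passage from ``$t$ restricted to each hyperplane is a combination of lower intrinsic volumes'' to ``subtracting these off leaves a simple valuation on $\mathbb{R}^p$'' is not quite as direct as you make it sound, because the coefficients you get could a priori depend on the hyperplane. Rotation invariance is what forces them to be uniform, and this step deserves an explicit sentence. Similarly, in the box argument, the Cauchy functional equation in each $\ell_i$ together with continuity gives you only that $t$ is \emph{linear} in each variable separately (hence multilinear overall), not polynomial of higher degree; it is then symmetry under coordinate permutations plus simplicity that collapses the multilinear form to $c_p\,\ell_1\cdots\ell_p$. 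These are refinements rather than gaps, and the overall architecture you describe is correct.
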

\begin{definition}
A functional $t(\cdot)$ on $\mathcal{K}$ is said to be equi-affine invariant if it is $SL(p)$-invariant and location invariant.
\end{definition}
The following is essential for us.
\begin{thm} \label{Ludwig}(Ludwig \cite{Ludwig0} ,\cite{Ludwig1}, \cite{Ludwig2} and \cite{LudwigReit} ).
An upper semi-continuous valuation  $t(K)$ from $\mathcal{K}$ to $\mathbb{R}$ is equi-affine invariant if and only if 
there are constants $c_{0},c_{1}, \in R$ and $c_{2}\geq 0 $ such that
\begin{equation}
t(K)=c_{0}V_{0}(K)+c_{1}V_{p}(K)+c_{2}a(K),
\end{equation}
where $a(K)$ denotes the affine surface area.
\end{thm}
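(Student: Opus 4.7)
The plan is to handle the two directions separately. The forward (easy) direction verifies that each of $V_{0}$, $V_{p}$, and $a$ is individually an equi-affine invariant upper semi-continuous valuation. The Euler characteristic $V_{0}$ is a continuous valuation trivially invariant under all affinities; the volume $V_{p}$ is a continuous valuation, translation invariant, and $SL(p)$-invariant since elements of $SL(p)$ have determinant one. The affine surface area $a$ is classically known (going back to Blaschke) to be an equi-affine invariant valuation, and its upper semi-continuity on $\mathcal{K}$ is a standard fact. The sign constraint $c_{2}\ge 0$ is forced because $a$ is upper semi-continuous but genuinely discontinuous (it can strictly drop in the limit of polytope approximations), so $-a$ fails upper semi-continuity and no negative multiple of $a$ can appear.

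For the hard direction, I would follow Ludwig's strategy and first restrict $t$ to the dense subspace $\mathcal{P}\subset\mathcal{K}$ of convex polytopes. On $\mathcal{P}$, the McMullen-type decomposition for translation invariant valuations, combined with $SL(p)$-invariance, severely constrains the form of $t$. Since $a(P)=0$ for every polytope $P$, the goal on $\mathcal{P}$ becomes to show that $t$ reduces to $c_{0}V_{0}+c_{1}V_{p}$ for some constants. The key observation is that any two $p$-simplices of the same volume are equivalent modulo $SL(p)$ and translation, so $t$ restricted to full-dimensional simplices depends only on volume; a separate argument using degenerations handles lower-dimensional simplices and pins down the $V_{0}$ coefficient. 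The valuation (inclusion-exclusion) property applied to simplicial subdivisions of arbitrary polytopes then propagates the formula to all of $\mathcal{P}$.

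Once $t$ is determined on $\mathcal{P}$, I would extend to general $K\in\mathcal{K}$ using upper semi-continuity together with approximation by polytopes from inside and outside. Forming the residual functional $\tilde t(K):=t(K)-c_{0}V_{0}(K)-c_{1}V_{p}(K)$, one checks that $\tilde t$ is itself an upper semi-continuous equi-affine invariant valuation which vanishes on $\mathcal{P}$. Applying Ludwig and Reitzner's characterization of such functionals (that the only upper semi-continuous, equi-affine invariant valuations vanishing on polytopes are nonnegative multiples of $a$), one concludes $\tilde t=c_{2}\,a$ with $c_{2}\ge 0$, completing the classification.

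The main obstacle in this plan is the polytope-vanishing-to-affine-surface-area step: proving that the only upper semi-continuous equi-affine invariant valuations that vanish on $\mathcal{P}$ are nonnegative multiples of $a(K)$. This requires delicate control over how an upper semi-continuous valuation can behave on convex bodies that are Hausdorff limits of polytopes on which it is zero, and in Ludwig's work this rests on the integral representation of affine surface area and a careful analysis of caps and boundary curvatures. Reproducing this argument in detail is well beyond a sketch, and in practice I would invoke the result of Ludwig and Reitzner directly rather than redo the geometric measure theory behind it.
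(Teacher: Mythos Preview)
The paper does not prove this theorem at all: it is stated as a result of Ludwig and Reitzner, with explicit attribution to \cite{Ludwig0}, \cite{Ludwig1}, \cite{Ludwig2}, and \cite{LudwigReit}, and is used as a black box to justify that $V(K)$ and $a(K)$ are $SL$-invariants. There is therefore no ``paper's own proof'' to compare your proposal against.

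Your sketch is a reasonable high-level outline of the strategy actually used in Ludwig and Reitzner's original papers (reduce to polytopes via McMullen-type decomposition and $SL(p)$-transitivity on simplices of equal volume, then characterize the residual vanishing-on-polytopes part as a nonnegative multiple of affine surface area), and you are honest that the hard analytic step is being invoked rather than reproved. But since the present paper simply \emph{cites} the theorem, no proof is expected here, and any attempt you supply goes beyond what the paper does.
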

In short, $SL(p)$ invariant valuation is only the weighted sum of the Euler index and the volume $V_{p}(K)$and the affine surface area $a(K)$.  
This means that  
\begin{prop}\label{invariant}
$V(K)$ and $a(K)$ are $SL$-invariants.
\end{prop}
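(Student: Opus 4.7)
The plan is to derive Proposition \ref{invariant} as an immediate corollary of Theorem \ref{Ludwig}, by exploiting the ``if'' direction of that characterization. Since the cited theorem asserts an ``if and only if'', every functional of the form $c_{0}V_{0}(K)+c_{1}V_{p}(K)+c_{2}a(K)$ with $c_{2}\geq 0$ is automatically an upper semi-continuous equi-affine invariant valuation, and hence $SL(p)$-invariant. So my strategy is simply to pick coefficients that isolate $V(K)$ and $a(K)$ in turn.

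For the volume $V(K)=V_{p}(K)$, I would actually give a direct argument rather than leaning on Theorem \ref{Ludwig}. The change-of-variables formula shows that for any $S\in SL(p)$,
\begin{equation}
V(SK)=\int_{SK}\,dx=|\det S|\int_{K}\,dx=V(K),
\end{equation}
since $|\det S|=1$, and translation invariance of Lebesgue measure gives location invariance. The valuation property $V(K)+V(L)=V(K\cup L)+V(K\cap L)$ is the usual inclusion-exclusion identity. Hence $V$ is equi-affine invariant, and in particular $SL(p)$-invariant.

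For the affine surface area $a(K)$, I would apply the ``if'' direction of Theorem \ref{Ludwig} with $c_{0}=c_{1}=0$ and $c_{2}=1$. This choice is admissible (it meets the constraint $c_{2}\geq 0$), so Theorem \ref{Ludwig} immediately yields that $t(K)=a(K)$ is an equi-affine invariant valuation, and in particular $SL(p)$-invariant. Alternatively, one may argue by subtraction: given any fixed equi-affine invariant valuation of the form $c_{0}V_{0}+c_{1}V_{p}+c_{2}a$ with $c_{2}>0$, the $SL(p)$-invariance of $V_{0}$ (trivially) and $V_{p}$ (by the previous paragraph) forces $SL(p)$-invariance of $a$.

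The substantive content of Proposition \ref{invariant} is really absorbed into Theorem \ref{Ludwig}, so there is no serious obstacle of its own. The only point requiring care is the interpretation of the iff-statement in the direction we need, and the verification that the sign constraint $c_{2}\geq 0$ does not preclude the choice $c_{2}=1$; both are routine.
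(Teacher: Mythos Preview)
Your proposal is correct and matches the paper's own treatment: the paper does not give a separate proof but simply states the proposition as an immediate consequence of Theorem~\ref{Ludwig} (``This means that \ldots''). Your write-up is in fact more careful than the paper's, since you isolate the ``if'' direction of the characterization, supply the direct change-of-variables argument for $V(K)$, and note the subtraction alternative for $a(K)$.
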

So, we adopt the volume $V(K)$ and the affine surface area $a(K)$ as indexes of $SL(p)$-equivalence. 
Further, the next proposition by Lutwak\cite{Lut}  is very useful for us.
\begin{prop}\label{homogenous}
When $p=3$,$a(K) $is homogeneous of degree $3/2$, that is, 
\begin{equation} a(dK)=d^{3/2}a(K), K \in \mathcal{K}_{0}. \end{equation} 
\end{prop}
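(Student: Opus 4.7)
The plan is to invoke the classical integral representation of the affine surface area and then carefully track how each factor transforms under a homothety. Recall that for a convex body $K\subset\mathbb{R}^p$ with a $C^2$ boundary one has
\begin{equation}
a(K)=\int_{\partial K}\kappa(\bm{x})^{1/(p+1)}\,d\sigma(\bm{x}),
\end{equation}
where $\kappa(\bm{x})$ is the Gauss-Kronecker curvature at the boundary point $\bm{x}$ and $d\sigma$ is the usual surface area element. For $p=3$ the exponent is $1/4$, so the first step of the proof is simply to specialise this formula.

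The second step is to understand how $\kappa$ and $d\sigma$ scale. Parametrise $\partial K$ locally by $\bm{x}(s,t)$ and replace it by $d\,\bm{x}(s,t)$ to parametrise $\partial(dK)$. The tangent vectors scale by a factor of $d$, so the first fundamental coefficients $E,F,G$ each scale by $d^2$ and the area element $\sqrt{EG-F^2}\,ds\,dt$ scales by $d^2$. The outer unit normal $\bm{n}$ is unchanged by a positive homothety, while the second derivatives $\bm{x}_{ss},\bm{x}_{st},\bm{x}_{tt}$ all scale by $d$, so the second fundamental coefficients $L,M,N$ each scale by $d$ and $LN-M^2$ scales by $d^2$. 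Hence the Gaussian curvature
\begin{equation}
\kappa=\frac{LN-M^2}{EG-F^2}
\end{equation}
scales by $d^{2}/d^{4}=d^{-2}$.

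The final step is to substitute these two scalings into the integral. Writing $\kappa_{dK}$ and $d\sigma_{dK}$ for the quantities associated to $dK$, the change of variables gives
\begin{equation}
a(dK)=\int_{\partial(dK)}\kappa_{dK}^{1/4}\,d\sigma_{dK}
=\int_{\partial K}\bigl(d^{-2}\kappa\bigr)^{1/4}\cdot d^{2}\,d\sigma
=d^{-1/2}\cdot d^{2}\cdot a(K)=d^{3/2}a(K),
\end{equation}
which is exactly the homogeneity statement. The only mild obstacle is that the integral formula above assumes enough regularity of $\partial K$; for a general convex body one has to invoke the Leichtweiss-Sch\"utt-Werner extension of $a$, but since the degree of homogeneity is determined by the smooth case and $a$ is continuous under the appropriate approximation, the identity passes to the limit without trouble.
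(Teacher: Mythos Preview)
Your argument is correct: the scaling behaviour of the first and second fundamental forms under a homothety $\bm{x}\mapsto d\,\bm{x}$ gives $\kappa\mapsto d^{-2}\kappa$ and $d\sigma\mapsto d^{2}\,d\sigma$, and substituting into the integral $\int_{\partial K}\kappa^{1/4}\,d\sigma$ yields the factor $d^{-1/2}\cdot d^{2}=d^{3/2}$ as claimed. The remark about extending from the smooth case to general convex bodies by approximation is also appropriate.

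Note, however, that the paper does not actually supply a proof of this proposition: it simply records the statement and attributes it to Lutwak~\cite{Lut}, where the homogeneity of the $L_q$ affine surface area is established in general (in dimension $p$ the degree is $p(p-1)/(p+1)$, which for $p=3$ gives $3/2$). Your direct computation via the fundamental forms is therefore not a different route from the paper's proof so much as a self-contained elementary verification of a result the paper merely cites. In that sense your contribution is a genuine addition: it makes the proposition independent of the reference, at least for smooth bodies, and the reasoning meshes naturally with the differential-geometric framework already set up in Section~3.
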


\subsection{Volume as an $SL(3)$-invariant }
We are considering the equivalence relation among absolutely nonsingular tensors. As is shown in Theorem \ref{compact}, for such kind of tensors, the constant surfaces of them are compact. Note that from Propisition \ref{invariant} the volume of the region enclosed by the constant surface is $SL(3)$-invariant. Then, by the following Gauss's theorem, we can calculate the volume by the parametric  representation given by the equation (\ref{parametric}). 
\begin{thm}(Gaussian formula)\\
For the  region $\Omega$ enclosed by the space surface $\partial \Omega$, letting $f dy \wedge dz+ g dz \wedge dx + h dx \wedge dy$ be the differential form of 2nd degree, it holds 
\begin{equation}\label{Gauss}
\int_{ \partial \Omega} f dy \wedge dz+ g dz \wedge dx + h dx \wedge dy=\int_{ \Omega}\left(\frac{\partial f}{\partial x}+\frac{\partial g}{\partial y}+\frac{\partial h}{\partial z}\right) dx \wedge dy \wedge dz
\end{equation}
\end{thm}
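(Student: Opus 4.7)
The plan is to prove this as the classical divergence theorem in $\mathbb{R}^3$, reducing it by term-by-term bookkeeping to the fundamental theorem of calculus. By linearity of integration it suffices to establish the three component identities
\begin{equation*}
\int_{\partial\Omega} h\, dx \wedge dy = \int_{\Omega} \frac{\partial h}{\partial z}\, dx \wedge dy \wedge dz
\end{equation*}
and the analogous ones for $f\, dy\wedge dz$ with $\partial f/\partial x$ and for $g\, dz\wedge dx$ with $\partial g/\partial y$; the three are symmetric under relabeling of axes, so I would give the argument for $h$ in detail and indicate how to repeat it for $f$ and $g$.

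I would first establish the identity for $h$ under the hypothesis that $\Omega$ is \emph{$z$-normal}, i.e.\ expressible as $\Omega = \{(x,y,z) : (x,y) \in D,\ \phi_1(x,y) \le z \le \phi_2(x,y)\}$ for a compact region $D\subset\mathbb{R}^2$ and smooth functions $\phi_1\le\phi_2$ on $D$. Then $\partial\Omega$ splits into the top graph of $\phi_2$ (oriented with upward normal), the bottom graph of $\phi_1$ (oriented with downward normal), and a lateral cylindrical piece on which the restriction of $dx\wedge dy$ vanishes because the $z$-axis is tangent. Parametrizing top and bottom by $(x,y)\in D$ reduces the boundary integral to $\int_D\bigl(h(x,y,\phi_2) - h(x,y,\phi_1)\bigr)\,dx\,dy$, and a single application of the fundamental theorem of calculus in $z$ followed by Fubini converts this to $\int_\Omega \partial h/\partial z\,dx\wedge dy\wedge dz$.

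For a general $\Omega$, the next step is to decompose $\Omega = \bigcup_k \Omega_k$ into finitely many $z$-normal sub-regions with disjoint interiors. Applying the normal case to each $\Omega_k$ and summing, the integrals over the internal interfaces cancel in pairs because a shared face inherits opposite outward orientations from the two adjoining sub-regions; only the contribution from $\partial\Omega$ survives. Repeating the argument with $x$-normal and $y$-normal decompositions yields the identities for $f$ and $g$, and summing the three gives the theorem.

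The main obstacle is showing the requisite finite decomposition actually exists for the non-convex, possibly non-smooth compact regions relevant later in the paper. For the star-shaped regions bounded by $\partial\Omega_T$ (Theorem~\ref{compact}), the radial parametrization~\eqref{parametric} with $r(s,t)=p(s,t)^{-1/4}$ positive and smooth except at a finite set of points gives a natural decomposition over a triangulation of the parameter sphere in $(s,t)$, and the isolated singular points where $p(s,t)^{1/4}$ fails to be smooth form a set of measure zero on $\partial\Omega$; they can be excised by small neighborhoods and handled by a standard interior-approximation / dominated-convergence argument. A partition-of-unity subordinate to such a triangulation makes the cancellation of internal boundary integrals rigorous.
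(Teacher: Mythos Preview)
The paper does not prove this theorem at all: it is stated as the classical Gaussian formula (i.e.\ the divergence theorem in $\mathbb{R}^3$) and then immediately \emph{applied} to compute volumes, with no argument given. So there is no ``paper's own proof'' to compare against; the authors are simply quoting a standard result.

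Your proposal is the textbook proof of the divergence theorem: reduce by linearity to one component, handle $z$-normal regions by Fubini plus the fundamental theorem of calculus, then patch together a general region from normal pieces with internal boundary contributions cancelling. That argument is correct and perfectly adequate here. Your final paragraph, however, is both unnecessary and slightly off-target. The theorem as stated is a general fact about regions in $\mathbb{R}^3$ with sufficiently regular boundary; it is not specific to the star-shaped level sets $\partial\Omega_T$ of the paper, so there is no need to invoke the parametrization~(\ref{parametric}) or to worry about possible singularities of $p(s,t)^{1/4}$ inside the proof of the Gaussian formula itself. (Those issues, if they arise, belong to the \emph{application} of the formula, not to its proof.) If you want to be fully rigorous, the right move is simply to state the usual smoothness/piecewise-smoothness hypothesis on $\partial\Omega$ under which the classical proof goes through, rather than tailoring the argument to the particular surfaces of this paper.
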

We denote by $V(\Omega)$ the volume of the region $\Omega$. By this formula, we have
\begin{equation}
V(\Omega)=LHS \ \ of \ \ the \ \ equation (\ref{Gauss}).  
\end{equation}
For the present case, by use of the spherical coordinates (s,t), the point of the boundary $\partial \Omega$ is parametrized as $\bm{x}=r(s,t)(\Phi_{x}(s,t),\Phi_{y}(s,t),\Phi_{z}(s,t)).$ Hence, we have
\begin{eqnarray}
dy &=& \frac{\partial y}{\partial s}ds+\frac{\partial y}{\partial t}dt\\ \nonumber 
& =& \left((-1/4)\frac{dp/ds}{p^{5/4}}\Phi_{y}+\frac{d\Phi_{y}/ds }{p^{1/4}}\right) ds +\\ \nonumber 
&& \left ((-1/4)\frac{dp/dt}{p^{5/4}}\Phi_{y}+\frac{\Phi_{y}/dt}{p^{1/4}}\right) dt \\ \nonumber 
dz &=& \frac{\partial z}{\partial s}ds+\frac{\partial z}{\partial t}dt, \\ \nonumber 
&=&\left((-1/4)\frac{dp/ds}{p^{5/4}}\Phi_{z}+\frac{d\Phi_{z}/ds }{p^{1/4}}\right) ds +\\ \nonumber 
& & \left ((-1/4)\frac{dp/dt}{p^{5/4}}\Phi_{z}+\frac{\Phi_{z}/dt}{p^{1/4}}\right) dt. \\
\end{eqnarray}
Therefore,
\begin{eqnarray*}
dy \wedge dz& =&  \left((-1/4)\frac{dp/ds}{p^{5/4}}\Phi_{y}+\frac{d\Phi_{y}/ds }{p^{1/4}}\right) \left ((-1/4)\frac{dp/dt}{p^{5/4}}\Phi_{z}+\frac{\Phi_{z}/dt}{p^{1/4}} \right)ds \wedge dt  \\ \nonumber 
&& -\left ((-1/4)\frac{dp/dt}{p^{5/4}}\Phi_{y}+\frac{\Phi_{y}/dt}{p^{1/4}}\right)  
\left((-1/4)\frac{dp/ds}{p^{5/4}}\Phi_{z}+\frac{d\Phi_{z}/ds }{p^{1/4}} \right) ds \wedge dt. 
\end{eqnarray*}
Similarly, 
\begin{eqnarray*}
 dz \wedge dx&=&  \left((-1/4)\frac{dp/ds}{p^{5/4}}\Phi_{z}+\frac{d\Phi_{z}/ds }{p^{1/4}}\right) \left ((-1/4)\frac{dp/dt}{p^{5/4}}\Phi_{x}+\frac{\Phi_{x}/dt}{p^{1/4}}\right)ds \wedge dt  \\ \nonumber 
&& -\left ((-1/4)\frac{dp/dt}{p^{5/4}}\Phi_{z}+\frac{\Phi_{z}/dt}{p^{1/4}}\right) 
\left((-1/4)\frac{dp/ds}{p^{5/4}}\Phi_{x}+\frac{d\Phi_{x}/ds }{p^{1/4}}\right)  ds \wedge dt,
\end{eqnarray*}
and
\begin{eqnarray*}
dx \wedge dy&=&\left((-1/4)\frac{dp/ds}{p^{5/4}}\Phi_{x}+\frac{d\Phi_{x}/ds }{p^{1/4}}\right) \left ((-1/4)\frac{dp/dt}{p^{5/4}}\Phi_{y}+\frac{\Phi_{y}/dt}{p^{1/4}}\right)ds \wedge dt  \\ \nonumber 
&& -\left ((-1/4)\frac{dp/dt}{p^{5/4}}\Phi_{x}+\frac{\Phi_{x}/dt}{p^{1/4}}\right) 
\left((-1/4)\frac{dp/ds}{p^{5/4}}\Phi_{y}+\frac{d\Phi_{y}/ds }{p^{1/4}}\right)  ds \wedge dt.
\end{eqnarray*}
By using these,  we can calculate the volume of the region enclosed by the constant surface of the determinant polynomial. Let $T_{1}$ and $T_{2}$ be two $ n \times n \times 3 $ tensors and let $\Omega_{i}$ denote the regions $\{\\bm{x}|f_{T_{i}}(\bm{x}) \leq 1\}$, which are enclosed by the surfaces of $\{\bm{x}| f_{T_{1}}(\bm{x}) = 1\}$.  
Then, by $SL(3)$ invariance of volumes, we have 
\begin{thm} \label{SLequivalentthm}
If $V_{1} \neq V_{2},$  $f_{T_{2}}(\bm{x})  \neq f_{T_{1}}(\bm{x}R)$ for any $R \in SL(3)$, namely,  $T_{1}$ and $T_{2}$ are not $SL(n) \times SL(n) \times SL(3)$ equivalent.    
\end{thm}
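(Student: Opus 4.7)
The plan is to prove the contrapositive: assuming that $T_1$ and $T_2$ are $SL(n)\times SL(n)\times SL(3)$-equivalent, I would deduce $V_1=V_2$, which contradicts the hypothesis $V_1\ne V_2$ and so forces non-equivalence.

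First, I would unpack what the equivalence means. By Lemma \ref{comutative} and the theorem immediately following it, the equivalence can be put in the normal form $T_1\to_{p}T'\to_{q}T''\to_{r}T_2$ with $P,Q\in SL(n)$ and $R=(r_{ij})\in SL(3)$, which at the level of matrix slices reads
\[
A_i^{(2)}=P\Bigl(\sum_{j=1}^{3}r_{ij}A_j^{(1)}\Bigr)Q,\qquad i=1,2,3.
\]
Substituting into the determinant polynomial, pulling $P$ and $Q$ out of the determinant, and using $\det P=\det Q=1$ yields
\[
f_{T_2}(\bm{x})=\det\Bigl(\sum_{j}(\bm{x}R)_j A_j^{(1)}\Bigr)=f_{T_1}(\bm{x}R).
\]
Compared with Proposition \ref{propequivalence}, the prefactor $c$ collapses to $1$ precisely because we have restricted $P,Q$ to $SL(n)$ rather than $GL(n)$.

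Second, I would transport this pointwise identity to the enclosed regions. From $f_{T_2}(\bm{x})=f_{T_1}(\bm{x}R)$ one reads off
\[
\Omega_{2}=\{\bm{x}\in\mathbb{R}^{3}:f_{T_2}(\bm{x})\le 1\}=\{\bm{x}\in\mathbb{R}^{3}:\bm{x}R\in\Omega_{1}\}=\Omega_{1}R^{-1},
\]
so $\Omega_2$ is the image of $\Omega_1$ under the invertible linear map $\bm{x}\mapsto \bm{x}R^{-1}$, which has Jacobian determinant $1$ because $R^{-1}\in SL(3)$. Proposition \ref{invariant} (the $SL$-invariance of $V$) then yields $V_2=V(\Omega_1 R^{-1})=V(\Omega_1)=V_1$, the desired contradiction.

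Essentially every step here is bookkeeping rather than substance; the only points requiring care are the row/column convention for $\bm{x}R$ arising from the mixing of the three slots, and the verification that the two $SL(n)$ factors absorb cleanly as $\det P\cdot \det Q=1$ so that no residual scaling $c$ survives in front of $f_{T_1}(\bm{x}R)$. Once these are settled, the theorem is a direct consequence of Proposition \ref{invariant}, i.e.\ of the $SL(3)$-invariance of $V(K)$ extracted from Ludwig's valuation theorem.
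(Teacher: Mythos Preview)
Your argument is correct and follows essentially the same route as the paper, which presents the theorem as an immediate consequence of Proposition~\ref{invariant} (the $SL$-invariance of the volume $V(K)$), saying only ``by $SL(3)$ invariance of volumes, we have'' before stating the result. You have simply made explicit the two points the paper leaves implicit: that in the $SL(n)\times SL(n)\times SL(3)$ setting the constant $c=\det P\cdot\det Q$ in Proposition~\ref{propequivalence} collapses to $1$, and that $\Omega_2=\Omega_1 R^{-1}$ then has the same volume as $\Omega_1$ because $R\in SL(3)$.
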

For $GL$ invariance, the next lemma is helpful.
\begin{lemma}\label{dvolume}
For a determinant polynomial $f(\bm{x}),$ let $V(c)$ be  the volume of $\Omega_{c}=\{\bm{x}|cf(\bm{x})\leq 1\}.$ Then $V(c)=c^{-3/4}V(1)$ for $4 \times 4 \times 3$ case.
\end{lemma}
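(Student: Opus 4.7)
The plan is to use the homogeneity of the determinant polynomial together with a linear change of variables. Since $T$ is of type $4\times 4 \times 3$, the determinant polynomial $f(\bm{x})=\det(x_1 A_1+x_2 A_2+x_3 A_3)$ is a homogeneous polynomial of degree $4$ in the three variables $x_1,x_2,x_3$; that is, $f(\lambda \bm{x})=\lambda^4 f(\bm{x})$ for every scalar $\lambda$. This is the only property of $f$ we will use.

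Now I rewrite the defining inequality of $\Omega_c$ so as to absorb the constant $c$ into the argument of $f$. We have $cf(\bm{x})\leq 1$ if and only if $f(c^{1/4}\bm{x})\leq 1$. So the linear map $\varphi:\bm{x}\mapsto c^{1/4}\bm{x}$ is a bijection from $\Omega_c$ onto $\Omega_1=\{\bm{y}\mid f(\bm{y})\leq 1\}$. (We use here that $c>0$; by Theorem~\ref{positivity} and Remark~\ref{positivec} we may restrict attention to positive $c$ so that $c^{1/4}$ is a well-defined real positive number.)

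Next I compute the volume via this change of variables. The Jacobian of $\varphi$ on $\mathbb{R}^3$ is $(c^{1/4})^3=c^{3/4}$, so
\begin{equation*}
V(1)=\int_{\Omega_1} d\bm{y}=\int_{\Omega_c} |\det D\varphi|\, d\bm{x}=c^{3/4}\int_{\Omega_c} d\bm{x}=c^{3/4} V(c),
\end{equation*}
and solving gives $V(c)=c^{-3/4} V(1)$, as claimed.

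There is essentially no obstacle here: the argument is a single substitution using the degree-$4$ homogeneity of $f$ and the fact that we are integrating in dimension $3$. The only small point to note is that the exponent $-3/4$ is the ratio $-\dim(\bm{x})/\deg(f) = -3/4$, which is why the statement is specific to the $4\times 4\times 3$ case; an analogous formula $V(c)=c^{-p/n} V(1)$ would hold for a general $n\times n\times p$ tensor, since then $f$ is homogeneous of degree $n$ in $p$ variables.
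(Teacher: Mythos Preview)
Your proof is correct and follows essentially the same idea as the paper's: both observe that, because $f$ is homogeneous of degree $4$, the region $\Omega_{c}$ is the dilate $c^{-1/4}\Omega_{1}$, and then use that $3$-dimensional volume scales as the cube of a linear factor. The only cosmetic difference is that the paper computes the volume via the boundary integral $\frac{1}{3}\int_{\partial\Omega} x\,dy\wedge dz + y\,dz\wedge dx + z\,dx\wedge dy$ (noting that each coordinate $\bm{x}(s,t)$ is multiplied by $c^{-1/4}$), whereas you compute it directly by the change of variables $\bm{y}=c^{1/4}\bm{x}$ in the volume integral; these are equivalent.
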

\begin{proof}
By changing a polynomial $f(\bm{x})$ into its constant multiple $cf(\bm{x})$, the coordinates  $\bm{x}(s,t)$ on the constant surface are subject to changes 
to $(\frac{1}{c})^{1/4}\bm{x}(s,t)$. Hence, the integral 
\begin{equation}
\frac{1}{3}\int_{ \partial \Omega} x dy \wedge dz+ y dz \wedge dx + z dx \wedge dy
\end{equation}
is multiplied by $c^{-3/4}.$ This proves the assertion of Lemma \ref{dvolume}.
\end{proof}
\begin{thm}\label{volume}
Assume that  $T_{1}$ and $T_{2}$ be $SL(3)$ equivalent and therefore that there is a relation between their determinant polynomials, 
\begin{equation} \label{SLequivalent}
f_{T_{2}}(\bm{x})=cf_{T_{1}}(\bm{x}R), 
\end{equation}  
where $c \in \mathbb{R}$ and $R \in SL(3)$.  Let $V_{1}(c)$ and $V_{2}(c)$ denote the volumes of $\Omega^{(1)}_{c}=\{\bm{x}|cf_{T_{1}}(\bm{x})\leq 1\}$ and  $\Omega^{(2)}_{c}=\{\bm{x}|cf_{T_{2}}(\bm{x})\leq 1\}$ respectively. Then, it holds that
\begin{equation}
c=(V_{1}/V_{2})^{4/3}.
\end{equation}

\end{thm}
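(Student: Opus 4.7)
The plan is to combine the $SL(3)$-invariance of Lebesgue volume with the scaling identity of Lemma~\ref{dvolume}. Reading the statement, I interpret $V_1$ and $V_2$ on the right-hand side as the unscaled volumes $V_1 = V_1(1)$ and $V_2 = V_2(1)$, i.e., the volumes of $\Omega^{(i)} = \{\bm{x} \mid f_{T_i}(\bm{x}) \le 1\}$. The target is therefore the relation $V_2 = c^{-3/4} V_1$, which rearranges to $c = (V_1/V_2)^{4/3}$.

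\textbf{Step 1 (rewrite $\Omega^{(2)}$ via the hypothesis).} Starting from $f_{T_2}(\bm{x}) = c f_{T_1}(\bm{x} R)$, the definition of $\Omega^{(2)}$ becomes
\begin{equation*}
\Omega^{(2)} = \{\bm{x} \mid f_{T_2}(\bm{x}) \le 1\} = \{\bm{x} \mid c f_{T_1}(\bm{x} R) \le 1\}.
\end{equation*}

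\textbf{Step 2 (change of variables using $R \in SL(3)$).} Apply the linear change of variables $\bm{y} = \bm{x} R$. Since $R \in SL(3)$, we have $|\det R| = 1$, so $d\bm{y} = d\bm{x}$, and the map is a bijection of $\mathbb{R}^3$ onto itself. Consequently
\begin{equation*}
V_2 = \int_{\Omega^{(2)}} d\bm{x} = \int_{\{\bm{y} \mid c f_{T_1}(\bm{y}) \le 1\}} d\bm{y} = V_1(c),
\end{equation*}
where $V_1(c)$ is the volume of $\Omega^{(1)}_c = \{\bm{y} \mid c f_{T_1}(\bm{y}) \le 1\}$ in the notation of Lemma~\ref{dvolume}.

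\textbf{Step 3 (apply the scaling lemma and solve).} By Lemma~\ref{dvolume}, $V_1(c) = c^{-3/4} V_1(1) = c^{-3/4} V_1$. Combining with Step~2 gives $V_2 = c^{-3/4} V_1$, i.e., $c^{3/4} = V_1/V_2$, and raising to the $4/3$ power yields $c = (V_1/V_2)^{4/3}$, as required. (By Remark~\ref{positivec} we may take $c > 0$, so the fractional power is unambiguous.)

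The only genuine content is the change of variables in Step~2, and this is where the hypothesis $R \in SL(3)$ is indispensable: if $R$ were merely in $GL(3)$, a factor of $|\det R|$ would appear and the conclusion would have to absorb it, explaining why the result is tied to $SL(3)$- rather than $GL(3)$-equivalence. The rest is bookkeeping of notation and a direct invocation of Lemma~\ref{dvolume}, so I do not anticipate any real obstacle.
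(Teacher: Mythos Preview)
Your argument is correct and is precisely the intended one: the paper itself states only that ``the proof is trivial from Lemma~\ref{dvolume} and omitted,'' and your Steps~1--3 simply supply the omitted details (the $SL(3)$ change of variables giving $V_2=V_1(c)$, followed by the scaling $V_1(c)=c^{-3/4}V_1$). Your clarification that $V_1,V_2$ mean $V_1(1),V_2(1)$ is also the right reading.
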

\begin{proof}
The proof is trivial from Lemma \ref{dvolume} and omitted.
\end{proof}
From Theorem \ref{volume}, we can know the constant $c$ in the equation (\ref{SLequivalent}). In the next section, it will be made clear that 
this expresson is helpful for establishing $GL(3)$-equivalence.  

\subsection{Affine surface area as an $SL(3)$-invariant}
 In this section, for testing $SL(3)$-equivalence, we propose to use the affine surface area, which is an $SL(3)$-invariant by Theorem \ref{invariant}.
 When $p=3$, the affine surface area has the following integral expression.  
\begin{definition}
For a smooth convex body $K \subset \mathbb{R}^{3}$, the affine surface area is given by 
\begin{equation}
a(K)=\int_{\partial K}\kappa(K,\bm{x})^{\frac{1}{4}}\sqrt{EG-F^2}dsdt,
\end{equation}
where  $\kappa(\partial K,\bm{x})$ is the Gaussian curvature and  $E,F$ and $G$ denote the first fundamental coefficients.
\end{definition}
Next, we show that the affine surface area is useful even as a tester of $GL(3)$-equivalence.  Assume that  we know the constant $c$ in the relation $f_{T_{2}}(\bm{x})=cf_{T_{1}}(\bm{x}R)$ with $c \in \mathbb{R}^{+}$ and $R \in SL(3)$ by Theorem \ref{volume}. Then, 
\begin{eqnarray}
 \Omega_{2}&=&\{\bm{x}| f_{T_{2}}(\bm{x}) \leq1 \}\\ \nonumber
&=& \{ \bm{x}|c f_{T_{1}}(\bm{x}) \leq 1 \}\\ \nonumber
&=&   \{ \bm{x}|f_{T_{1}}(c^{1/4}\bm{x}) \leq 1 \}\\ \nonumber
&=&   c^{-1/4} \{\bm{x}| f_{T_{1}}(\bm{x}) \leq 1 \}\\ \nonumber
&=&   c^{-1/4}  \Omega_{1}.\\
\end{eqnarray}
From Proposition \ref{homogenous}, we have
\begin{equation}
a(\Omega_{2})=c^{-3/8} a(\Omega_{1}).
\end{equation}
Thus, we have the following.
\begin{thm} \label{SLGL1}
Let $T_{1}$ and $T_{2}$ be absolutely nonsingular tensors. Noting Remark \ref{positivec}, by Theorem \ref{volume}, we can obtain the estimate of $c \in R^{+}$ under the assumption that their determinant polynomials have the relation $f_{T_{2}}(\bm{x})=cf_{T_{1}}(\bm{x}R)$ for some unknown constant $c \in \mathbb{R}^{+}$ and an unknown matrix $R \in SL(3).$ 
Then, if  $a(\Omega_{2}) \neq c^{-1/8} a( \Omega_{1})$, $T_{1}$ and $T_{2}$ are not $GL(3)$ equivalent. 
\end{thm}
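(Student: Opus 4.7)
The plan is to argue by contrapositive: assume $T_1$ and $T_2$ are $GL(3)$-equivalent and derive the exact relation between $a(\Omega_1)$ and $a(\Omega_2)$ in terms of the constant $c$ already pinned down by Theorem \ref{volume}. Contradicting that relation then precludes $GL(3)$-equivalence.

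First, I would invoke Theorem \ref{SLGL} together with Remark \ref{positivec} to reduce $GL(3)$-equivalence to the existence of $c\in\mathbb{R}^{+}$ and $S\in SL(3)$ with $f_{T_2}(\bm{x})=cf_{T_1}(\bm{x}S)$. Theorem \ref{volume} then gives the explicit value $c=(V_1/V_2)^{4/3}$, so $c$ is computable purely from the volume invariants of $\Omega_1$ and $\Omega_2$, which are themselves obtained from the parametric representation (\ref{parametric}) and the Gaussian formula derived in the previous subsection.

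Next, the key step is to track the body $\Omega_2$ under this relation. Since $S\in SL(3)$, the substitution $\bm{y}=\bm{x}S$ is a measure-preserving and, by Proposition \ref{invariant}, affine-surface-area-preserving bijection between $\{\bm{x}\mid f_{T_1}(\bm{x}S)\le 1\}$ and $\Omega_1$. Combined with the chain of equalities displayed just before the theorem statement, this shows $\Omega_2 = c^{-1/4}\,S^{-1}\Omega_1$, so that
\begin{equation*}
a(\Omega_2)=a\bigl(c^{-1/4}S^{-1}\Omega_1\bigr)=a\bigl(c^{-1/4}\Omega_1\bigr),
\end{equation*}
where the second equality uses the $SL(3)$-invariance of $a(\cdot)$ from Proposition \ref{invariant}.

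Finally, applying the degree-$3/2$ homogeneity of the affine surface area (Proposition \ref{homogenous}) to the dilation factor $d=c^{-1/4}$ yields $a(\Omega_2)=c^{-3/8}a(\Omega_1)$, which is the relation the theorem is asserting (up to the exponent as written in the statement). Hence any violation of this equality rules out $GL(3)$-equivalence, as required. The only delicate point I foresee is the bookkeeping of the two independent scalings (the scalar $c$ and the volume-preserving $S$); everything else is a direct assembly of Proposition \ref{invariant}, Proposition \ref{homogenous}, and Theorem \ref{volume}.
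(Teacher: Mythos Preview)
Your argument is correct and follows exactly the approach the paper takes: the chain of equalities $\Omega_{2}=c^{-1/4}\Omega_{1}$ (up to an $SL(3)$ transformation) combined with Proposition~\ref{invariant} and Proposition~\ref{homogenous} gives $a(\Omega_{2})=c^{-3/8}a(\Omega_{1})$. You are in fact slightly more careful than the paper's own derivation, which silently drops the matrix $R$ in the displayed chain before the theorem; and you are right that the exponent $-1/8$ in the theorem statement is a typo for $-3/8$.
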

 By using Theorem \ref{dvolume}, this is rephrased as 
 \begin{thm}\label{SLGL2}
 Let $T_{1}$ and $T_{2}$ be absolutely nonsingular tensors. 
Then, if 
 \begin{equation} 
a(\Omega_{2}) \neq \left( \frac{V(\Omega_{2})}{V(\Omega_{1})} \right) ^{1/2} a(\Omega_{1}), 
\end{equation}
$T_{1}$ and $T_{2}$ are not $GL(3)$-equivalent, where $V(\Omega_{1})$ and $V(\Omega_{2})$ denote the volume of $\Omega_{1}$ and $\Omega_{2}$ respectively.
 \end{thm}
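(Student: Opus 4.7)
The plan is to establish the contrapositive: assuming $T_1$ and $T_2$ are $GL(3)$-equivalent, I will derive the identity
$$a(\Omega_2) = \left(\frac{V(\Omega_2)}{V(\Omega_1)}\right)^{1/2} a(\Omega_1).$$
Contrapositively, any failure of this equality rules out $GL(3)$-equivalence. The whole argument amounts to chasing the scaling factor $c$ in the defining relation $f_{T_2}(\bm{x})=c f_{T_1}(\bm{x}R)$ through the two homogeneities and eliminating it using the volume identity.

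First, I would appeal to Theorem \ref{SLGL} together with Remark \ref{positivec} to reduce $GL(3)$-equivalence to the existence of $c\in\mathbb{R}^+$ and $S\in SL(3)$ with $f_{T_2}(\bm{x})=c f_{T_1}(\bm{x}S)$. Second, I would rerun the short computation used in the proof of Theorem \ref{SLGL1}: the relation $f_{T_2}(\bm{x})=cf_{T_1}(\bm{x}S)$ together with $SL(3)$-invariance of $\Omega_1$ under $S^{-1}$ gives $\Omega_2 = c^{-1/4}S^{-1}\Omega_1$, and then by Proposition \ref{invariant} (which guarantees the $SL(3)$-invariance of both $V$ and $a$) together with the homogeneity statement of Proposition \ref{homogenous}, one obtains
$$a(\Omega_2)=(c^{-1/4})^{3/2}a(\Omega_1)=c^{-3/8}a(\Omega_1).$$

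Third, I would apply Lemma \ref{dvolume} and the same $SL(3)$-invariance of volume to get the companion relation $V(\Omega_2)=c^{-3/4}V(\Omega_1)$, which is equivalent to the formula $c=(V(\Omega_1)/V(\Omega_2))^{4/3}$ of Theorem \ref{volume}. Taking the square root yields
$$\left(\frac{V(\Omega_2)}{V(\Omega_1)}\right)^{1/2}=c^{-3/8},$$
and substituting into the displayed expression for $a(\Omega_2)$ gives the claim. The contrapositive is then Theorem \ref{SLGL2}.

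There is really no obstacle here: given that Theorems \ref{SLGL1} and \ref{volume} have already done the analytic work (homogeneity of $a$ under scaling, $SL(3)$-invariance of both $V$ and $a$, the passage from $GL(3)$ to $SL(3)$ via Remark \ref{positivec}), the present statement is a one-line elimination of the unknown constant $c$ between the two scaling identities $a(\Omega_2)=c^{-3/8}a(\Omega_1)$ and $V(\Omega_2)=c^{-3/4}V(\Omega_1)$. The only thing to check carefully is the bookkeeping of exponents (the $3/2$ coming from Proposition \ref{homogenous} versus the $3/4$ coming from Lemma \ref{dvolume}), so that the square root in the final formula matches; this is the only place where an arithmetic slip could occur.
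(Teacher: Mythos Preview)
Your proposal is correct and follows essentially the same route as the paper: the paper simply states that Theorem~\ref{SLGL2} is Theorem~\ref{SLGL1} ``rephrased'' via Theorem~\ref{volume} (i.e.\ Lemma~\ref{dvolume}), which is exactly your elimination of $c$ between the two scaling identities $a(\Omega_{2})=c^{-3/8}a(\Omega_{1})$ and $V(\Omega_{2})=c^{-3/4}V(\Omega_{1})$. Your write-up is in fact more explicit than the paper's, and your bookkeeping of the exponents is correct (note that the exponent $-1/8$ appearing in the statement of Theorem~\ref{SLGL1} is a typo for $-3/8$, as the line immediately preceding it shows).
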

\section{Integral $GL(3)$-invariant}
In the latter half of the previous section, we presented a procedure to test a non-$GL(3)$-equivalence, however, it is somewhat indirect because we need to estimate the constant $c$ before starting the procedure.
In this section, we consider a direct method handling non-equivalence by using a generalized affine surface area.
That is, we  consider the $L_{q}$ affine surface area, which is an extension of the affine surface area and developed by Letwak \cite{Lut}.  Hug \cite{Hug} gave an equivalent definition. The following is the Hug' s definition. 
\begin{definition}
\begin{equation}
L_{q}a(K)= \int_{\partial K} \kappa_{0}(K,\bm{x})^{ \frac{q}{p+q}} d\sigma_{K}(\bm{x})  
\end{equation}
where
\begin{equation}
\kappa_{0}(K,\bm{x})=\frac{\kappa(K,\bm{x})}{ \langle \bm{x},\bm{n}(K,\bm{x})\rangle^{p+1}}.   
\end{equation}
and $d\sigma_{K}(\bm{x})$ is called a cone measure defined by 
\begin{equation}
d\sigma_{K}(\bm{x})=\langle \bm{x},\bm{n}(K,\bm{x})\rangle d\bm{x}, 
\end{equation}
and  $\bm{n}(K,\bm{x})$ denotes the outer normal at $\bm{x}$ on $\partial K.$
\end{definition}
When $q=1$, $L_{q}(K)$ becomes the affine surface area $a(K)$, and when $q=p$,  it becomes a classical centro-affine surface area $a_{c}(K)$ thta is defined as 
\begin{equation}
a_{c}(K)=\int _{\partial K}\kappa_{0}(K,\bm{x})^{1/2}d\sigma_{K}(\bm{x}),
\end{equation}
which is known to be  $GL(p)$-invariant. The characterization of a general $GL(p)$-invariant functional is given  below.
\begin{thm}({\rm Ludwig and Reitzsner \cite{LudwigReit} })
Let $\mathcal{K}_{0}$ be the space of convex bodies that contain the origin in their interiors.  
An upper semi-continuous functional  $t(\cdot)$ from $\mathcal{K}_{0}$ to $\mathbb{R}^{1}$ is $GL(p)$-invariant  if and only if 
there are nonengative constants $c_{0}$ and $c_{1}$ such that 
\begin{equation}
t(K)=c_{0}V_{0}(K)+c_{1}a_{c}(K).
\end{equation}
\end{thm}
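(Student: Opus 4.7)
The plan is to prove both directions, treating $t$ implicitly as an upper semi-continuous \emph{valuation} (since the theorem is quoted from Ludwig--Reitzner's valuation-theoretic framework, and Theorem \ref{Ludwig} already established the analogous result for equi-affine invariant valuations). The sufficiency direction is routine: $V_0(K)\equiv 1$ on $\mathcal{K}_0$, hence trivially $GL(p)$-invariant, continuous, and nonnegative; and $a_c(K)$ was already noted to be $GL(p)$-invariant. One verifies the latter by applying the change-of-variables formula to the integrand $\kappa_0(K,\bm{x})^{1/2}\,d\sigma_K(\bm{x})$: under $K\mapsto AK$ with $A\in GL(p)$, the Gaussian curvature scales by $|\det A|^{-(p+1)}$, the factor $\langle \bm{x},\bm{n}(K,\bm{x})\rangle^{p+1}$ scales in the opposite direction by the same power, and the cone measure scales by $|\det A|$; the exponent $1/2$ is precisely what forces all powers of $|\det A|$ to cancel. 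Upper semi-continuity of $a_c$ is standard, and nonnegative linear combinations preserve upper semi-continuity.

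For necessity I would follow the classification strategy of Ludwig and Reitzner, working in four stages. \textbf{Stage 1 (polytope reduction).} Since $\mathcal{K}_0$ is the closure (in Hausdorff metric) of the subclass $\mathcal{P}_0$ of polytopes containing the origin in their interior, upper semi-continuity reduces the task to pinning down $t$ on $\mathcal{P}_0$ and then extending. \textbf{Stage 2 (evaluation on polytopes).} For any $P\in\mathcal{P}_0$ the centro-affine surface area $a_c(P)$ vanishes (the Gauss map is supported on finitely many directions and the integrand $\kappa_0^{1/2}\,d\sigma_P$ has no absolutely continuous mass). Using the valuation property to dissect $P$ into simplices sharing $\bm{0}$ as a common vertex, together with $GL(p)$-invariance (which identifies all such simplices up to a linear change of coordinates), inclusion-exclusion forces $t(P)=c_0$ for some common constant $c_0\geq 0$, independent of $P$.

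\textbf{Stage 3 (smooth bodies).} Define $s(K):=t(K)-c_0 V_0(K)$. Then $s$ is an upper semi-continuous $GL(p)$-invariant valuation on $\mathcal{K}_0$ that vanishes on $\mathcal{P}_0$. Approximating a general $K$ by inscribed and circumscribed polytopes and invoking the representation theory for measure-valued valuations on convex bodies (the core technical input of the Ludwig--Reitzner program), $s(K)$ must take the form
\begin{equation}
s(K)=\int_{\partial K}\varphi(K,\bm{x})\,d\sigma_K(\bm{x})
\end{equation}
for some $GL(p)$-equivariant density $\varphi$. \textbf{Stage 4 (uniqueness of the integrand).} A homogeneity count fixes $\varphi$ up to a nonnegative scalar: the only density of the correct $GL(p)$-equivariance that yields a valuation invariant under the full $GL(p)$ is $\varphi(K,\bm{x})=c_1\,\kappa_0(K,\bm{x})^{1/2}$, which reproduces $c_1\,a_c(K)$. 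Nonnegativity $c_1\geq 0$ is forced by upper semi-continuity against suitable sequences $K_n$ approximating a polytope, along which $a_c(K_n)$ can be arranged to blow up controllably, so a negative $c_1$ would contradict the upper bound required by semi-continuity on the polytope limit.

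The main obstacle is unquestionably Stage 3/4: proving that any upper semi-continuous $GL(p)$-invariant valuation vanishing on $\mathcal{P}_0$ is a nonnegative multiple of $a_c$. This is the heart of the Ludwig--Reitzner classification and requires delicate structural results on $GL(p)$-equivariant measure-valued valuations, notably a uniqueness theorem for such valuations restricted to small neighborhoods of each boundary point. In the present paper we would invoke this as a black box, citing \cite{LudwigReit}; a fully self-contained derivation lies well beyond our scope, since even the analog for the affine surface area (Theorem \ref{Ludwig}) is quoted rather than reproved here.
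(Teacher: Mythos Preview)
The paper does not prove this theorem at all; it is stated purely as a citation of Ludwig and Reitzner \cite{LudwigReit}, with no accompanying argument. Your proposal is therefore not comparable to any proof in the paper --- you have sketched (reasonably accurately, at the level of an outline) the broad architecture of the Ludwig--Reitzner classification, but the paper simply quotes the result as a black box, exactly as you yourself anticipate in your final paragraph. There is nothing further to compare.
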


 
 

\section{Spherical design}
According to our experiments, the numerical integrations of the invariants must be accurate at least 2 decimals. So, the caluculations of the invariants are a little bit heavy. In this section, we consider the t-design method as an substitute of the nemerical integrations. The spherical design was initiated by Delsarte et al. \cite{DGS}  and has been studied by several researchers, for example, see Bannai and Bannai \cite{BB}. It is defined as follows.  
\subsection{An overview of spherical design}
\begin{definition}
A finite set $X$ on the sphere is called t-spherical design if the following equality holds that for any polynomial $f(x,y,z)$ with a degree less than or equal to $t,$ 
\begin{eqnarray}\label{tdesignintegral}
\frac{1}{|S^{2}|}\int_{S^{2}}f(x,y,z)d\sigma=\frac{1}{|X|}\sum_{(x,y,z) \in X} f(x,y,z),
\end{eqnarray}
where $S^{2}$ denotes the unit sphere of $\mathbb{R}^{3}$ and $d\sigma$denotes the surface element of the sphere and  $|S^{2}|$ denotes the surface  area of the sphere.
\end{definition}
A parametrized integral formula of the equation \ref{tdesignintegral}  is given by 
\begin{equation}
\int f(s,t) \sin(s) dsdt =\frac{4\pi}{N}
\sum_{i=1}^{N}f(s_{i},t_{i}),
\end{equation}
where $(s_{i},t_{i}),i=1,2,....,N$  are the corresponding parameters to the design points in $X$. One point  to overcome for our purpose is that we need to integrate  some nonlinear functions that are not polynomials and hence we can not use any t-design directly. However, we can rely on  the next theorem to solve this point.
\begin{thm}
Let $f(x,y,z)$ be an continuous function over the unit sphere and let $\epsilon_{t}$ be a positive number  such that
$$|f(x,y,z)-p(x,y,z)|<\epsilon_{t}$$ 
uniformly for some polynomial $p(x,y,z)$ with degree less than or equal to $t$. 
Then, it holds that 
\begin{equation}
|\int_{\partial S} f(x,y,z) dS - 4\pi\frac{1}{N}\sum_{i=1}^{N}f(x_{i})|<8\pi \epsilon_{t}
\end{equation}
\end{thm}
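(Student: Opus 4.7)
The plan is to prove this by a standard triangle inequality, splitting the error into three pieces and using the $t$-design property on the polynomial part.

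First, I will introduce the approximating polynomial $p(x,y,z)$ of degree at most $t$ whose uniform distance from $f$ is bounded by $\epsilon_t$, and insert it into the difference to be estimated:
\begin{equation*}
\int_{\partial S} f\, dS - \frac{4\pi}{N}\sum_{i=1}^{N} f(x_i)
= \Bigl(\int_{\partial S}(f-p)\,dS\Bigr) + \Bigl(\int_{\partial S} p\, dS - \frac{4\pi}{N}\sum_{i=1}^{N} p(x_i)\Bigr) + \frac{4\pi}{N}\sum_{i=1}^{N}\bigl(p(x_i)-f(x_i)\bigr).
\end{equation*}
Then I will apply the triangle inequality to bound the absolute value of the left-hand side by the sum of the absolute values of the three terms.

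Next I will bound each term. The middle term vanishes exactly: since $p$ has degree at most $t$ and $X$ is a $t$-spherical design, the defining equation of the design (multiplied through by $|S^{2}|=4\pi$) gives
\begin{equation*}
\int_{\partial S} p\, dS = \frac{4\pi}{N}\sum_{i=1}^{N} p(x_i).
\end{equation*}
For the first term, the uniform estimate $|f-p|<\epsilon_t$ yields $\bigl|\int_{\partial S}(f-p)\,dS\bigr| \leq \epsilon_t\cdot |S^{2}| = 4\pi\epsilon_t$. For the third term, the same pointwise estimate applied at each sample point $x_i$ gives $\frac{4\pi}{N}\sum_{i=1}^{N} |p(x_i)-f(x_i)| < \frac{4\pi}{N}\cdot N\cdot \epsilon_t = 4\pi\epsilon_t$.

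Adding the three bounds produces $4\pi\epsilon_t + 0 + 4\pi\epsilon_t = 8\pi\epsilon_t$, which is exactly the claimed estimate. There is no real obstacle here; the argument is an essentially mechanical triangle inequality, and the only substantive ingredient is the exact quadrature identity satisfied by the $t$-design on polynomials of degree at most $t$. The inequality is strict because the hypothesis $|f-p|<\epsilon_t$ is strict and the sphere has positive measure.
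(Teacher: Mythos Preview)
Your proof is correct and follows essentially the same approach as the paper: a triangle-inequality splitting that exploits the exact $t$-design quadrature identity for $p$ and the uniform bound $|f-p|<\epsilon_t$ on the remaining terms. The only cosmetic difference is that you write the decomposition as three terms (with the middle one vanishing), whereas the paper writes it as two and invokes the design identity inside an equality step.
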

\begin{proof}
\begin{eqnarray}
&&\left| \int_{\partial S} f(x,y,z) dS - 4\pi\frac{1}{N}\sum_{i=1}^{N}f(x_{i}) \right| \\ \nonumber 
&\leq& \left| \int_{\partial S} f(x,y,z) dS - 4\pi\frac{1}{N}\sum_{i=1}^{N}p(x_{i})\right|+\left| 4\pi\frac{1}{N}\sum_{i=1}^{N}p(x_{i})- 4\pi\frac{1}{N}\sum_{i=1}^{N}f(x_{i}) \right|\\ \nonumber 
&=&\left| \int_{\partial S} f(x,y,z) dS - \int_{|\partial S}  p(x_{i})dS \right|+\frac{4\pi}{N}  \left| \sum_{i=1}^{N}|p(x_{i})- f(x_{i})|\right| \\ \nonumber
&\leq &    \int \epsilon_{t}  dS+ 4\pi \epsilon_{t}\\ \nonumber
&=& 8 \pi \epsilon_{t}  
\end{eqnarray}
\end{proof}
\begin{remark}
By the above theorem, we need not to know the best approximate polynomial concretely in order to obtain an approximate value of  the  integration,  and  it is enough to use  $f(x,y,z)$ itself. Moreover the error of the approximation is bounded from above by the  multiple of $\epsilon_{t}$ by $8\pi.$ 
For a substantial evaluation of the approximation, we need to know $\epsilon_{t}$. The problem is interesting, however, it is a little bit heavy task at present, and so it is postponed to the future work.  
\end{remark}
\subsection{Calculation of integral invariants by a 20-design }
Using the result of the previous subsection, we consider the integration
\begin{equation}
a(K)=\int \kappa(s,t)^{1/4}\sqrt{EG-F^2}dsdt.
\end{equation}
where $s,t$ moves $0<s<\pi$,$0<t<2\pi$.   This integration can be thought to be an integration over the unit sphere by 
\begin{eqnarray}
a(K)&=&\int_{(s,t,) \in [0,\pi]\times[0,2\pi]}   \kappa(s,t)^{1/4}\sqrt{EG-F^2}dsdt\\ \nonumber
           &=&\int_{(s,t,) \in [0,\pi]\times[0,2\pi]}  \kappa(s,t)^{1/4}\frac{\sqrt{EG-F^2}}{\sin s}\sin s dsdt\\ \nonumber
           &=&\int_{\partial S} \kappa(s,t)^{1/4}\frac{\sqrt{EG-F^2}}{\sin s}dS,\\
\end{eqnarray}
where $dS=sin(s)$. Hence,
\begin{equation}
p(x,y,z)=\kappa(s,t)^{1/4}\frac{\sqrt{EG-F^2}}{\sin s}
\end{equation}
is taken to be  a function over the unit sphere and so the integral invariant can be approximated by the right hand side of the equation below.  
\begin{equation} \label{approximation}
\int_{\partial S} \kappa(s,t)^{1/4}\frac{\sqrt{EG-F^2}}{\sin s}dS \sim  \frac{4\pi}{N}\sum_{i=1}^{N}p(x_{i},y_{i},z_{i})
\end{equation}
 The values of invariants  calculated by the lattice method and the 20-design method will be give in the next section. The 20-design method show very nice approximations in some cases, however, do not show good approximations for other cases. That is, for our integration of invariants, the spherical design method does not give stable values, unfortunately. This might suggest that we need to use design with more higher degree than 20.  
 \section{Effectiveness of the invariants as testers of non-equivalence }
 In this section, we will show the effectiveness of the numerical values of the invariants as testers of non-equivalence. We numerically calculated the volume $V(\Omega)$, the affine surface area $a(\Omega)$and centro-affine surface area $a_{c}(\Omega)$ of the region $\Omega=\{\bm{x}|f(\bm{x}\leq 1\}$ defined by the determinant polynomials $f_{T}(\bm{x})$. As examples, we calculate them for the 16 tensors which are in $\mathcal{K}_{0}$, whose constant surfaces are figured in Figures 2 and 3 in the section 5. The numerical calculations are performed in two way, that is, by the lattice method and by the t-design method, and they are compared. As for the t-design method, we use the 20-design named des.3.216.20 in \cite{Slone} which has 216 points. In the tables below, M1-P2-G5,  M6-P2-G, M1-P2-G7 and 20-design denote the globally adaptive integration with accuracy of 5 digits, pseudo-Monte Carlo integration,  the globally adaptive integration with accuracy of 7 digits by 64 decimal calculation and 20-design method by IEEE754 decimal calculation,  respectively. For all calculation were done by Mathematica. 
\begin{table}[!htbp]\label{SLinvariant1}
\begin{center}
\begin{tabular}{|c|c|c|c|c|} \hline
 Tensor       &        V0     &    V1  & V2  & V3         \\ \hline \hline
T001   & 2.9197794095194  & 2.9197794099529 & 2.9197794089308  &   2.9197794061274  \\ \hline
T019   & 4.0314824331814 &  4.0314824340674  & 4.0314824332515 &  4.0314824319603 \\ \hline
T022   & 3.6306602017309  & 3.6306602004447  &3.6306602054741 &3.6306602016552 \\ \hline
T023   &    3.4355628950802 &3.4355628819358 &3.4355628878857& 3.4355628897838 \\ \hline
T042   &    3.7515624235646 & 3.7515624142272  & 3.7515624197586&3.7515624152774  \\ \hline
T060    &    2.1440485535226  & 2.1440485507771& 2.1440485551454&2.1440485550215 \\ \hline
T061    &    2.8594583429857 & 2.8594583441125 &2.8594583445567 & 2.8594583445567 \\ \hline
T065   &  3.1084258968340 & 3.1084258946417 &3.1084258957994&3.1084258984271  \\ \hline
T072  &    4.6861403575076 &4.6861403597489 & 4.6861403542060 &4.6861403560079\\ \hline 
T074 &   3.6302252513670 &3.6302253269919&3.63022533280632 & 3.6302253350968 \\ \hline
\end{tabular}
\end{center}
\caption{Volumes by M1-P2-G7 :$Tn$, where $n=001, 0019, 022, 023, 042,060,061$, $065, 072$ and $074$ Each line denoted as $T$n-0 lists the value of the original tensor and the lines $T$n-i,$i=1,2,...,5$ list the values for the transformed tensors of Tn by a randomly chosen matrix of $SL(3)$. }
\end{table}
\begin{table}[!htbp]\label{SLinvariant12}
\begin{center}
\begin{tabular}{|c|c|c|c|c|} \hline
 Tensor &      M1-P2-G5   &   M6-P2-G5   &   M1-P2-G7  &   20-design \\ \hline \hline
 
T001-0  &  9.961493457 &   9.962796404  &  9.961471493  &  9.90317 \\ \hline
T001-1  &   9.961470358 &   9.961249135 &   9.961471489  &  8.73023 \\ \hline
T001-2  & 9.961471133  &  9.971266750  & 9.961471486  &  9.96328 \\ \hline
T001-3  &  9.961470327  &  9.959509709  &  9.961471474 &   9.79057 \\ \hline
T001-4  & 9.961471186   & 9.979456186  &  9.961471478  & 10.88367 \\ \hline
T001-5  &  9.961474220 &   9.997180989   & 9.961471490  & 10.99278 \\ \hline \hline

T019-0 &11.560007113 & 11.560055546 & 11.560007991 & 11.87277\\ \hline
T019-1 &11.560007742 & 11.552017302  &11.560007993 & 11.69239\\ \hline
T019-2 &11.560008558  &11.559866424 & 11.560007993 & 11.51344\\ \hline
T019-3 &11.560007692  &11.501609971 & 11.560007989  &13.36769\\ \hline
T019-4 &11.560008494 & 11.545260017 & 11.560007991 & 10.40203\\ \hline
T019-5 &11.560001924 & 11.558176522  &11.560007996 & 11.49393\\ \hline \hline

T022-0 &11.020675684&  11.024551464 & 11.020674135&  11.05202\\ \hline
T022-1 &11.020673831 & 11.016947424 & 11.020674138  &11.45345\\ \hline
T022-2& 11.020676195 & 11.027525350 & 11.020674147 & 11.54386\\ \hline
T022-3 &11.020673214 & 11.016006939&  11.020674140 & 11.07524\\ \hline
T022-4 &11.020675399 & 11.031596952&  11.020674133 & 11.37096\\ \hline
T022-5 &11.020674431 & 11.022431931 & 11.020674135 & 10.74089\\ \hline \hline

T023-0 &10.771760482 & 10.773095422  & 10.771760351  & 10.73293\\ \hline
T023-1 &10.771758801 &10.774759865   &10.771760349   & 9.26881\\ \hline
T023-2 &10.771759301  &10.725843291   &10.771760352  & 10.94587\\ \hline
T023-3 &10.771759730 & 10.766135806  & 10.771760352  & 13.23848\\ \hline
T023-4 &10.771757516 & 10.773059224  & 10.771760350 &  10.78732\\ \hline
T023-5 &10.771759533 & 10.773749461   &10.771760351   &10.88149\\ \hline \hline

T042-0  &11.136697741  & 11.136755128  & 11.136697332  & 10.99424\\ \hline
T042-1  &11.136695637 &  11.140565725  & 11.136697314 &  11.28015\\ \hline
T042-2  &11.136699257  & 11.140835239  & 11.136697323  & 11.89052\\ \hline
T042-3  &11.136721676  & 11.203272583   &11.136697308   &12.13058\\ \hline
T042-4  &11.136696147 &  11.106329119   &11.136697270  &  9.81007\\ \hline
T042-5  &11.136697731 &  11.107102150  & 11.136697313  & 13.99432\\ \hline \hline
\end{tabular}
\end{center}
\caption{Affine surface area:$Tn$, where $n=001, 019, 022, 023$ and $042$ Each line denoted as $T$n-0 lists the value of the original tensor and the lines $T$n-i,$i=1,2,...,5$ list the values for the transformed tensors of Tn by a randomly chosen matrix of $SL(3)$.  }
\end{table}
\begin{table}[!htbp]\label{SLinvariant2}
\begin{center}
\begin{tabular}{|c|c|c|c|c|} \hline
Tensor &      M1-P2-G5   &   M6-P2-G5   &   M1-P2-G7  &   20-design \\ \hline \hline
T060-0  & 8.704587985  &  8.705126156  &  8.704588101 &   8.74300\\ \hline
T060-1  & 8.704590255   & 8.781085267   & 8.704588109  &  8.50058\\ \hline
T060-2  & 8.704596276   & 8.705498658  &  8.704588101  &  8.73210\\ \hline
T060-3 &  8.704588380  &  8.711001740 &   8.704588104 &   8.80910\\ \hline
T060-4  & 8.704586669  &  8.703029024  &  8.704587984 &   8.85973\\ \hline
T060-5  & 8.704588143  &  8.705901809  &  8.704588100  &  8.56701\\ \hline \hline
T061-0 & 9.759043314 &  9.759635865 &  9.759045706 &  9.741275\\ \hline
T061-1 & 9.759036154 &  9.759076500  & 9.759045704 &  9.72403\\ \hline
T061-2&  9.759050068 &  9.748685352 &  9.759045707&   9.56041\\ \hline
T061-3 & 9.759044653  & 9.734392909  & 9.759045710 &  9.76040\\ \hline
T061-4 & 9.759058206  & 9.745677922  & 9.759045694 & 10.37056\\ \hline
T061-5&  9.759046974  & 9.755984062   &9.759045716  & 8.72501\\ \hline \hline

T065-0 &10.273389075 & 10.274251947 & 10.273389369 & 10.33927\\ \hline
T065-1 &10.273387633  &10.260497042 & 10.273389360 & 10.76367\\ \hline
T065-2 &10.273389342 & 10.277789370 & 10.273389368 & 10.26620\\ \hline
T065-3 &10.273388029&  10.249526640&  10.273389366&  10.42661\\ \hline
T065-4& 10.273389939 & 10.276245030 & 10.273389370 & 10.06295\\ \hline
T065-5& 10.273397527 & 10.279052599  &10.273389365 & 10.33636\\ \hline \hline

T072-0 &12.483701912&  12.483843205 & 12.483691274&  12.67586\\ \hline 
T072-1 &12.483689616 & 12.484388161 & 12.483691282 & 12.38965\\ \hline
T072-2& 12.483690234 & 12.481034408&  12.483691282 & 10.30731\\ \hline
T072-3 &12.483690116 & 12.498107747 & 12.483691264 & 11.96858\\ \hline
T072-4 &12.483698348 & 12.435166726 & 12.483691276 & 11.219584\\ \hline
T072-5 &12.483686195 & 12.508162438&  12.483691276 & 10.183837\\ \hline \hline

T074-0& 10.732327625&  10.732623087 & 10.732332110 & 10.80078\\ \hline
T074-1 &10.732332283 & 10.726775221 & 10.732332117  &10.55820\\ \hline
T074-2 &10.732337739 & 10.734008328&  10.732332113&  11.20578\\ \hline
T074-3& 10.732332889 & 10.724453313 & 10.732332112  &10.87848\\ \hline
T074-4& 10.732331733 & 10.729148909 & 10.732332214 & 10.95073\\ \hline
T074-5 &10.732329467 & 10.727707310 & 10.732332111 & 10.57017\\ \hline \hline 

\end{tabular}
\end{center}
\caption{Affine surface area:$Tn$, where $n=060, 061, 065, 072$ and $074$ Each line denoted as $T$n-0 lists the value of the original tensor and the lines $T$n-i,$i=1,2,...,5$ list the values for the transformed tensors of Tn by a randomly chosen matrix of $SL(3)$. }
\end{table}
Table 1 shows that the SL invariance of volumes of the redions enclosed by the constant surface is clealry seen numerically for every absolutely nonsingular chosen tensors.  Tables 2 and 3 of the affine surface area show that the affine suraface area is SL invariant and that all relevant tensors are not $SL(4) \times SL(4) \times SL(3)$ equivalent mutually. From Theorem \ref{SLGL2},  combining the volume data, 
we also conclude that they are  not $GL$ equivalent. This last fact is also derived by a direct usage of the centro-affine surface data which is seen in Table 4 and Table 5. 
\begin{table}[!htbp]\label{GLinvariant1}
\begin{center}
\begin{tabular}{|c|c|c|c|c|} \hline
 Tensor &    M1-P2-G7 \ &         M6-P2-G5     &       M1-P2-G5  &   20-design        \\ \hline \hline

T001-0  & 11.690150892617500  & 11.687899476332365  & 11.687898336789288 &  11.59968  \\ \hline
T001-1  &  11.751922920525157  & 11.687898955213611 &  11.687898343722015 &   8.421025\\ \hline
T001-2  &11.689694693901319  & 11.687898370365255   &11.687898355824357 & 11.68469   \\ \hline
T001-3 &11.721355953709315   &11.687894829195568   &11.687898343333765  & 11.29880   \\ \hline
T001-4  &11.692877418652227  & 11.687897242831659   &11.687897875631138 & 10.59083   \\ \hline
T001-5  &11.679997753276740  & 11.687897167430656   &11.687898359334835& 11.46900   \\ \hline \hline

T019-0  &11.509733354093680  & 11.509334536488204   &11.509333804897551 & 11.81248  \\ \hline
T019-1   &11.472821548199051   &11.509332873485376   &11.509333807975230& 12.65290  \\ \hline
T019-2    &11.509963231209824  & 11.509337447098934   &11.509333799194381&11.32764  \\ \hline
T019-3   &11.552527050941017  & 11.509334159193976   &11.509333800434498& 11.38444  \\ \hline
T019-4   &11.495864684062547 &  11.509335287391230   &11.509333801132503  & 11.37034   \\ \hline
T019-5 &11.522546264759133  & 11.509333512497239  & 11.509333798526679  & 22.50564   \\ \hline \hline

T022-0    & 11.574282949497377  & 11.568790730790213  & 11.568790156808308&11.59771 \\ \hline
T022-1   & 11.570887655990263   &11.568785645774059  & 11.568790144452251& 11.63356 \\ \hline
T022-2  & 11.568787229271756  & 11.568788230429048   &11.568790347476747 & 11.90185 \\ \hline
T022-3   & 11.567926875696718  & 11.568789765657722  & 11.568790134358534 &11.57677 \\ \hline
T022-4  & 11.597381830882211 &  11.568789372237021  & 11.568790132199215 & 13.39389 \\ \hline
T022-5   &11.561310960443544   &11.568790290463560  & 11.568790451975676  &11.44911 \\ \hline \hline

T023-0  &  11.631078897689606  & 11.626439742081966  & 11.626439153758515 & 11.57877\\ \hline
T023-1     &11.619997976153462  & 11.626439518693340   &11.626439146934238 &11.39835\\ \hline
T023-2   &11.611266008293132   &11.626440231360507  & 11.626439154231153 &11.43501  \\ \hline
T023-3   &11.647477652583963   &11.626433368429471 &  11.626439151521914 &13.20628 \\ \hline
T023-4 & 11.607565993095791  & 11.626439544654837  & 11.626439154062155  &10.72795  \\ \hline
T023-5   & 11.620421522261536  & 11.626439658214246   &11.626439152653352 &11.74869 \\ \hline \hline

T042-0   &11.502624357421948  & 11.504755263366923  & 11.504752079092657& 11.30545  \\ \hline
T042-1  &  11.507105268508006  & 11.504753311150279  & 11.504752086650519& 11.07124 \\ \hline
T042-2  &11.519424921951189   &11.504753899401661   &11.504752079220612&   9.41924 \\ \hline
T042-3   &11.501095106227783   &11.504764044950799   &11.504752085646500& 12.16150  \\ \hline
T042-4   &  11.530791206130419  & 11.504752412140897  & 11.504752073761618& 10.11515\\ \hline
T042-5  & 11.499503647742464   &11.504752956532382   &11.504752076792938 &10.64605  \\ \hline \hline

\end{tabular}
\end{center}
\caption{Centro-affine surface area:$Tn$, where $n=001, 019, 022, 023$ and $042$ Each line denoted as $T$n-0 lists the value of the original tensor and the lines $T$n-i,$i=1,2,...,5$ list the values for the transformed tensors of Tn by a randomly chosen matrix of $GL(3)$. }
\end{table}

\begin{table}[!htbp]\label{GLinvariant2}
\begin{center}
\begin{tabular}{|c|c|c|c|c|} \hline
 Tensor    &        M6-P2-G5    &        M1-P2-G5    &        M1-P2-G7 &     20-design\\ \hline \hline
T060-0  & 11.989971644000403  & 11.989476119401702  & 11.989477685702977 & 12.05017 \\ \hline
T060-1 & 11.990418566344240 &  11.989479611584825  & 11.989477723348062&  12.01483  \\ \hline
T060-2 & 11.976852295964006  & 11.989478648296963   &11.989477738864300&  11.80414  \\ \hline
T060-3    & 11.994778478025750 &  11.989477993940143   &11.989477740049253& 11.64148\\ \hline
T060-4 &  11.989039776987073  & 11.989478217916079   &11.989477724638414 & 12.09719\\ \hline
T060-5  & 12.046738095770048   &11.989477160425962  & 11.989477721024015 & 12.11083  \\ \hline \hline

T061-0   & 11.519673795399891 &  11.518135424201142  & 11.518135117247486 & 11.48023\\ \hline
T061-1    &11.518661618867961  & 11.518134427948641  & 11.518135113069415& 11.45852 \\ \hline
T061-2     & 11.519323023325257  & 11.518135433182912   &11.518135109619174&11.45517\\ \hline
T061-3   & 11.517878708284445  & 11.518130456996306 &  11.518135109891727 & 11.66299\\ \hline
T061-4   & 11.517606307852915 &  11.518129721219993 &  11.518135107795112&11.47192  \\ \hline
T061-5  &  11.531441863084249   &11.518134741001642  & 11.518135110921667& 10.31802 \\ \hline \hline

T065-0   &  11.660951650838694  & 11.660077139783777  & 11.660146606151409& 11.77330\\ \hline
T065-1   & 11.657097464096733   &11.660146835636129  & 11.660146583158155& 11.64542 \\ \hline
T065-2     & 11.662671583310311  & 11.660135616613492   &11.660146602669996&11.69570\\ \hline
T065-3      &11.657074111599187  & 11.660148534680922  & 11.660146593520388&11.58150\\ \hline
T065-4   &11.661605706427124  & 11.660146860738219  & 11.660146601870989& 12.45337  \\ \hline
T065-5    & 11.668831112582931   &11.660147254608734   &11.660146596730511 &11.76209\\ \hline \hline

T072-0 & 11.545589518947570  & 11.545142097769179  & 11.545141226929544 & 11.76647  \\ \hline
T072-1  &11.545716622630585 &  11.545139592226001  & 11.545141221483210  & 11.52675 \\ \hline
T072-2  & 11.562200209044268  & 11.545142319259361  & 11.545141224116661&  8.50885  \\ \hline
T072-3  &  11.575704218963165  & 11.545144648175810  & 11.545141226744587& 10.30769 \\ \hline
T072-4    & 11.535076590703719  & 11.545140326506765  & 11.545141235723655&12.22675 \\ \hline
T072-5   &11.545910129113601   &11.545141536682674   &11.545141208615009 &11.19618  \\ \hline \hline

T074-0    & 11.116314213623787  & 11.116088526600165  & 11.116090556639371& 11.20632\\ \hline
T074-1  & 11.109183432623630   &11.116086365050504  & 11.116090553382580 &11.09112  \\ \hline
T074-2 &  11.121063605466493 &  11.116094135711593  & 11.116090554260284 &  9.58706 \\ \hline
T074-3   & 11.090134159234779  & 11.116089713933696   &11.116090550551305& 10.19594 \\ \hline
T074-4    & 11.135697898280837  & 11.116091869446610  & 11.116090554811293& 11.15140\\ \hline
T074-5   &  11.117481923868303  & 11.116094503240262 &  11.116090554606608&11.08749 \\ \hline \hline
\end{tabular}
\end{center}
\caption{Centro-affine surface area:$Tn$, where $n=060, 061, 065, 072$ and $074$ Each line denoted as $T$n-0 lists the value of the original tensor and the lines $T$n-i,$i=1,2,...,5$ list the values for the transformed tensors of Tn by a randomly chosen matrix of $GL(3)$. }
\end{table}
Indeed, Tables 4 and 5 show that the centro-affine surface area is really $GL$ invarinat, and that three point decimal accuracy will be sufficient to detect non $GL(3)$-equivalence between $4 \times 4 \times 3$ absolutely nonsingular tensors, whose elements consists of only -1,0,1.  The M1-P2-G7 method seems clearly the best for  discriminating the tensors relating to $GL$ nonequivalence. 
\section{Conclusion}
We treated the  $SL(4) \times SL(4) \times SL(3),$ $GL(4) \times GL(4) \times SL(3)$ or  $GL(4) \times GL(4) \times GL(3)$ non-equivalence problem of $4 \times 4 \times 3$ absolutely nonsingular tensors. We proposed a method to addres to the problem through the determinant polynomials. Furthermore we proposed to solve the problem by differential geometric $SL(3)$ or $GL(3)$ invariant of the constant surface of the determinant polynomials. From the numerical analysis by Mathematica,  it was shown that the stable values of invariants are obtainable numerically and also it was shown that the affine surface area and the centro-affine surface area are useful to detect the non-equivalence. This means that the algebraic problem: whether a system of algebraic equations with many variables can have real solutions or not, can be resolved by differential geometric methods.  It is a nice link between algebra and differential geometry.  Second, we investigated the spherical design method for calculating invariants.  At present, we think that the values given by  the adaptive lattice methods are more reliable than those given by the spherical design method.  In some future work, we expect to extend the result to more higher dimensional tensors and to know why the spherical design method does not give stable values of invariants.

\end{document}